\newtheorem{theorem}{Theorem}
\newtheorem{corollary}[theorem]{Corollary}
\newtheorem{lemma}[theorem]{Lemma}
\newtheorem{proposition}[theorem]{Proposition}
\theoremstyle{definition}
\newtheorem{example}[theorem]{Example}
\newtheorem{remark}[theorem]{Remark}
\numberwithin{equation}{section}
\newcommand{\ep}{\varepsilon}
\newcommand{\abs}[1]{\lvert#1\rvert}
\newcommand{\bignorm}[1]{\bigl\lVert#1\bigr\rVert}
\newcommand{\Bigabs}[1]{\Bigl\lvert#1\Bigr\rvert}
\newcommand{\Bignorm}[1]{\Bigl\lVert#1\Bigr\rVert}
\newcommand{\N}{{\mathbb N}}
\newcommand{\R}{{\mathbb R}}
\newcommand{\E}{{\mathbb E}}
\newcommand{\cX}{{\mathcal X}}
\newcommand{\cC}{{\mathcal C}}
\newcommand{\cF}{{\mathcal F}}
\newcommand{\bP}{{\mathbb P}}
\def\one{\mathbb 1}
\title{On the extension property of dilatation monotone risk measures}
\author[M.~Rahsepar]{Massoomeh Rahsepar}
\address{Department of Mathematics, Ryerson University,  Toronto, Canada}
\email{srahnema@ryerson.ca}
\author[F.~Xanthos]{Foivos Xanthos}
\address{Department of Mathematics, Ryerson University,  Toronto, Canada}
\email{foivos@ryerson.ca}
\date{\today}
\thanks{This research was supported by an NSERC Discovery Grant. This first author is also supported by an Ontario Graduate Scholarship.}
\keywords{Extension of risk measures, dilatation monotonicity, law invariance, Fatou property, Orlicz spaces, transformed norm risk measures, higher order dual risk measures, dual representations, Kusuoka representations. }
\subjclass[2010]{91B30, 91G80, 46E30}
\begin{document}

\begin{abstract}

Let $\cX$ be a subset of $L^1$ that contains the space of simple random variables $\mathcal{L}$ and $\rho: \cX \rightarrow (-\infty,\infty]$ a dilatation monotone functional with the Fatou property. In this note, we show that $\rho$ extends uniquely  to a $\sigma(L^1,\mathcal{L})$ lower semicontinuous and dilatation monotone functional $\overline{\rho}: L^1 \rightarrow (-\infty,\infty]$. Moreover,  $\overline{\rho}$ preserves monotonicity, (quasi)convexity, and cash-additivity of $\rho$. Our findings complement recent extension results for quasiconvex law-invariant functionals proved in \cite{LS:19,LT:19}. As an application of our results, we show that transformed norm risk measures on Orlicz hearts admit a natural extension to $L^1$ that retains the robust representations obtained in \cite{CL:09,KPR:10}.


\end{abstract}
\maketitle

\section{Introduction and Notations}

In the axiomatic theory of risk measures, it is customary to assume that risk measures are defined on a subspace $\cX$ of $L^1$. The problem of extending the domain of risk measures to the entire $L^1$ arises naturally in applications and has received significant attention in the mathematical finance literature (see e.g. \cite{CGX:18,FS:12,GLMX:18,KM:14,LS:19,LT:19}). The celebrated work \cite{FS:12}, asserts that if $\cX$ is some $L^p$-space for $1\leq p\leq \infty$, then any law-invariant convex risk measure $\rho:L^p \rightarrow (-\infty,\infty]$ with the Fatou property can be extended uniquely  to a law-invariant and $||\cdot||_1$ lower semicontinuous convex risk measure $\overline{\rho}:L^1 \rightarrow (-\infty,\infty]$. This result has been recently generalized to cover the case where $\cX$ is an Orlicz space or a more general rearrangement invariant space (\cite{CGX:18,GLMX:18,LS:19,LT:19}). A crucial step in the proof of the aforementioned results is that the underlying assumptions force the risk measure to be dilatation monotone. This property was introduced in \cite{L:04} and reflects the general belief that balancing out should never increase the involved risk. For a connection with other  notions of risk aversion, we refer the reader to \cite{CG:07,LS:19,MW:18,S:14}. 

In this paper, we study the extension problem for dilatation monotone risk measures  defined on a subset $\cX$ of $L^1$. The sole assumption we impose on the domain is that $\cX$ contains the space of simple random variables $\mathcal{L}$. This broad framework is  motivated by the  general class of transformed norm risk measures (\cite{CL:08,CL:09}), as in this case the domain typically is not a vector space (see Example~\ref{hgexmp}). In Section 2, we gather  the main results of the paper. Lemma ~\ref{Mlemma1} contains a result on convergence of finite conditional expectations that is of independent interest. In Proposition \ref{starproperty}, we show that the dilatation monotonicity property coupled with the Fatou property implies that the underlying risk measure is lower semicontinuous with respect to the relative $\sigma(L^1,\mathcal{L})$ topology. Theorem ~\ref{main_thm2} is our main extension result. In Corollary~\ref{lawcor}, we enlarge the class of domains $\cX$ for which the extension problem for quasiconvex law-invariant risk measures has a positive solution. Section 3 contains our application to transformed norm risk measures $T$. In Theorem \ref{tnormcor}, we calculate the extension $\overline{T}:L^1 \rightarrow (-\infty,\infty]$ and show that $\overline{T}$ preserves the desirable properties of $T$. In Corollary ~\ref{tnrom_cor}, we extend the dual representation of $T$ obtained in \cite{CL:09} and in Corollary~\ref{highorder_cor}, we extend the Kusuoka representation for higher order dual risk measures obtained in \cite{KPR:10}.




\subsection*{Notations} Let $(\Omega,\mathcal{F},\mathbb{P})$ be a fixed nonatomic probability space. We denote by $L^1$ the space of integrable random variables on $(\Omega,\mathcal{F},\mathbb{P})$ modulo almost sure equality and by $||\cdot||_1$ the $L^1$-norm. Throughout this paper, all equalities and inequalities are understood in the $\mathbb{P}$-almost sure (a.s.) sense.  The space of bounded random variables is denoted by $L^\infty$. A subspace $\cX$ of $L^1$ is said to be an ideal of $L^1$ whenever $|X| \leq |Y|$ and $Y \in \mathcal{X}$ imply $X \in \mathcal{X}$ for all $X \in L^1$.  For any subset $\mathcal{S}$ of $L^1$, we denote by $\mathcal{I}(\mathcal{S})$ the smallest ideal of $L^1$ including $\mathcal{S}$, which is given by  $\mathcal{I}(\mathcal{S})=\{X \in L^1 : n\in\N, \text{there exist }X_1,\dots,X_n \in \mathcal{S} \text{ and } \lambda_i \in \mathbb{R}_+ \text{ such that  } |X| \leq  \sum_{i=1}^n \lambda_i |X_i|\}$ (see e.g.~\cite{AB:06}, p.~322). For any $A \in \mathcal{F}$, we denote the indicator function by $\one_{A}$. A random variable is simple if it has the form $\sum_{i=1}^n a_i \one_{A_i}$, where $n \in \mathbb{N}, a_i \in \mathbb{R}$ and $A_i \in \mathcal{F}$. The space of simple random variables is denoted by $\mathcal{L}$.  We write $\pi$ to denote a finite measurable partition of $\Omega$ whose members have non-zero probabilities, and write $\Pi$ for the collection of all such $\pi$. We denote by $\sigma(\pi)$, the finite $\sigma$-subalgebra generated by $\pi$, and write $\mathbb{E}[X|\pi]:=\mathbb{E}[X|\sigma(\pi)]$ to denote the conditional expectation of $X \in L^1$ with respect to $\sigma(\pi)$. We recall below a few fundamental facts about  conditional expectations. For any $\pi=\{A_1,...,A_k\} \in \Pi$ and $X \in L^1$,  we have that $\mathbb{E}[X|\pi]=\sum_{i=1}^k \frac{\mathbb{E}[X\one_{A_i}]}{\mathbb{P}(A_i)}\one_{A_i}$ and  $||\mathbb{E}[X|\pi]||_1 \leq ||X||_1$. Moreover, $\mathbb{E}[X_n|\pi] \xrightarrow{{||\cdot||_1}} \mathbb{E}[X|\pi]$  for every sequence $(X_n)_{n \in \mathbb{N}} \subset L^1$ such that $X_n \xrightarrow{{||\cdot||_1}} X$.


\section{The main results}

\emph{In the following, $\cX$ will denote a subset of $L^1$ that contains the space of simple random variables $\mathcal{L}$ and $\rho: \mathcal{X} \rightarrow [-\infty,\infty]$}. We say that $\rho$  is   \emph{dilatation monotone} whenever $\rho(\mathbb{E}[X|\pi])\leq \rho(X)$ for any $X \in \mathcal{X}$ and $ \pi \in \Pi$ and that $\rho$ is \emph{law invariant} if for any $X,Y \in \mathcal{X}$ such that $X,Y$ have the same distribution under $\mathbb{P}$,   $\rho(X)=\rho(Y)$. $\rho$ is \emph{convex} (resp., \emph{quasiconvex}) whenever $\rho(\lambda X+(1-\lambda)Y)\leq \lambda \rho(X)+(1-\lambda) \rho(Y)$ (resp., $\rho(\lambda X+(1-\lambda)Y)\leq \max\{\rho(X),\rho(Y)\}$) for any $X,Y \in \mathcal{X}$ and $\lambda \in (0,1)$ such that $\lambda X+(1-\lambda) Y \in \mathcal{X}$. $\rho$ is \emph{cash-additive} whenever $\rho(X+s)=\rho(X)-s$ for any $X \in \mathcal{X}$ and $s \in \mathbb{R}$ with $X+s \in \mathcal{X}$. $\rho$ is \emph{monotone} whenever $\rho(X) \leq \rho(Y)$ for any $X,Y \in \mathcal{X}$ with $X \geq Y$. We say that $\rho$ has the  \emph{Fatou property} if for any sequence $(X_n)_{n \in \mathbb{N}}$ in $\mathcal{X}$ and $X \in \mathcal{X}$ we have that 
\[X_n \xrightarrow{{a.s.}} X \text{ and } \sup_{n \in \mathbb{N}}|X_n| \in \mathcal{I}(\mathcal{X}) \implies \rho(X) \leq \liminf_n \rho(X_n).   \tag{$\star$}\]
Let $\tau$ be a topology on $\cX$. We say that  $\rho$ is \emph{$\tau$ lower semicontinuous}, whenever the sublevel set $\{\rho\leq \lambda\}:=\{X \in \cX :\rho(X) \leq \lambda\}$ is $\tau$ closed for each $\lambda \in \mathbb{R}$. We also  recall that if $\tau$ is a metrizable topology, then $\rho$ is $\tau$ lower semicontinuous if and only if for any sequence $(X_n)_{n \in \mathbb{N}}$ in $\mathcal{X}$ and $X \in \mathcal{X}$ we have that $X_n \xrightarrow{{\tau}} X$ implies $\rho(X) \leq \liminf_n \rho(X_n)$ (see, e.g., (\cite{AB:06}, Lemma 2.42)).

\begin{remark}
We note here that in the standard framework where $\cX$ is a rearrangement invariant space, or more generally, an ideal of $L^1$, the supremum taken in ($\star$) is an element of $\mathcal{X}$ and thus the definition of the Fatou property we give here coincides with the one  used in the literature (see e.g. \cite{GX:18} and the references therein). 
\end{remark}

Convergence of finite conditional expectations plays a key role in the analysis of dilatation monotone functionals. It is a well-known fact that for any $X \in L^1$, one can construct a sequence $(\pi_n)_{n \in \mathbb{N}} \subset \Pi$ such that $\mathbb{E}[X|\pi_n] \xrightarrow{{a.s.}} X$ and $\sup_{n \in \mathbb{N}}|\mathbb{E}[X|\pi_n]| \in L^1$. The following result tells us that we can pick $(\pi_n)_{n \in \mathbb{N}}$ in such a way that $\sup_{n \in \mathbb{N}}|\mathbb{E}[X|\pi_n]| \in \mathcal{I}(\{X,\one\})$, which enables us to give a positive answer to Question 2.8.2 in \cite{CGX:18}.

\begin{lemma}\label{Mlemma1}
For any $X\in L^1$, there exists $(\pi_n)_{n \in \mathbb{N}} \subset \Pi$ and $k \in \mathbb{R}_+$ such that $$\mathbb{E}[X|\pi_n] \xrightarrow{a.s.} X  \text{ and } |\mathbb{E}[X|\pi_n]| \leq |X|+k, \quad\text{ for all } n\in \mathbb{N}$$
\end{lemma}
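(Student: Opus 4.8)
The plan is to build the partitions $\pi_n$ by combining two ingredients: a dyadic partition of the \emph{range} of $X$ on a bounded region, and a single coarse piece handling the tails where $|X|$ is large. First I would fix $X \in L^1$ and, for each $n$, set $B_n = \{|X| \le n\}$, so that $\mathbb{P}(B_n) \to 1$ and $\mathbf{1}_{B_n} X \to X$ a.s. On $B_n$, partition the interval $[-n,n]$ into $2n \cdot 2^n$ subintervals $I_j^n$ of equal length $2^{-n}$, and let $A_j^n = X^{-1}(I_j^n) \cap B_n$; discard those with zero probability. Together with the single tail set $B_n^c = \{|X| > n\}$ (again included only if it has positive probability), these sets form a finite measurable partition $\pi_n \in \Pi$ of $\Omega$ (adjoining $B_n^c$ to some $A_j^n$ if $\mathbb P(B_n^c)=0$, or just keeping $\Omega$ if needed to stay in $\Pi$).

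The key estimates are then the following. On each $A_j^n \subseteq B_n$, we have $\mathbb{E}[X|\pi_n] = \mathbb{E}[X \mathbf{1}_{A_j^n}]/\mathbb{P}(A_j^n)$, which is an average of values of $X$ all lying in $I_j^n$, hence differs from $X(\omega)$ by at most $2^{-n}$ for $\omega \in A_j^n$; in particular $|\mathbb{E}[X|\pi_n]| \le |X| + 1$ on $B_n$. This also gives $|\mathbb{E}[X|\pi_n] - X| \le 2^{-n}$ on $B_n$, so on $\bigcup_m \bigcap_{n\ge m} B_n$ — which is a.s. all of $\Omega$ since $\mathbb{P}(B_n^c) \to 0$ and $\sum \mathbb{P}(|X|>n) $ need not converge, so instead I would argue via: for a.e. $\omega$, $\omega \in B_n$ for all large $n$, hence $\mathbb{E}[X|\pi_n](\omega) \to X(\omega)$. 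On the tail piece $B_n^c$, $\mathbb{E}[X|\pi_n] = \mathbb{E}[X\mathbf{1}_{B_n^c}]/\mathbb{P}(B_n^c)$, a constant; here I would need the bound $|\mathbb{E}[X \mathbf 1_{B_n^c}]/\mathbb P(B_n^c)| \le |X| + k$ on $B_n^c$. This is the delicate point: the left side is roughly the conditional average of $|X|$ given $|X|>n$, which can be arbitrarily large, but it is still dominated by $|X|$ itself on much of $B_n^c$ — except possibly near the "edge" where $|X|$ is just above $n$.

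The fix for the tail is to not use the crude single-interval partition but to refine $B_n^c$ as well, partitioning $\{n < |X| \le 2^m\}$ dyadically and only lumping together the truly-infinite tail $\{|X| > 2^m\}$ for $m$ chosen large depending on $n$; alternatively — and more cleanly — I would replace the requirement ``$|\mathbb E[X|\pi_n]| \le |X| + k$'' by first choosing a \emph{stopping level} and observing that on $\{|X| > n\}$ one can take the conditional expectation over the smaller set $\{|X| > n\} \cap \{X > 0\}$ and $\{|X|>n\}\cap\{X<0\}$ separately, so that the average of $X$ over each has the same sign as $X$ there and magnitude at most $\mathbb E[|X|\,\mathbf 1_{|X|>n}]/\mathbb P(|X|>n)$. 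To make this at most $|X| + k$ pointwise on $\{|X|>n\}$, it suffices that this conditional average be at most $n + k \le |X| + k$, i.e. that $\mathbb E[|X|\,\mathbf 1_{|X|>n}] \le (n+k)\mathbb P(|X|>n)$; this fails in general, so the genuinely correct approach is the dyadic refinement of the tail: partition $\{2^{j} < |X| \le 2^{j+1}\}$ into pieces for $j \ge \log_2 n$, giving $|\mathbb E[X|\pi_n]| \le 2|X|$ on each, hence $k=0$ works with the constant $2$ absorbed — more precisely $|\mathbb E[X|\pi_n]| \le |X| + |X| \le \dots$; to get an \emph{additive} constant one stops the dyadic refinement at a fixed large scale $N$ (independent of $n$) and lumps $\{|X|>2^N\}$ into one set whose conditional average is some fixed number $k = \mathbb E[|X|\,\mathbf 1_{|X|>2^N}]/\mathbb P(|X|>2^N)$.

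So the final construction is: fix $N$ large, and for each $n$ let $\pi_n$ consist of (i) the dyadic level sets of $X$ at resolution $2^{-n}$ inside $\{|X| \le \min(n, 2^N)\}$, (ii) the dyadic-in-magnitude level sets $\{2^j < |X| \le 2^{j+1}\}$ for $\lfloor \log_2 n\rfloor \le j < N$ refined to resolution $2^{-n}$, and (iii) the single set $\{|X| > 2^N\}$. On (i) and (ii) the conditional expectation stays within $2^{-n}$, hence within $|X|$, of $X$ and has the same order of magnitude, giving $|\mathbb E[X|\pi_n]| \le |X| + 1$; on (iii) it equals the fixed constant $c_N := \mathbb E[X\mathbf 1_{|X|>2^N}]/\mathbb P(|X|>2^N)$, and $|c_N| \le |X| + |c_N|$ trivially there, so with $k := 1 + |c_N|$ we get $|\mathbb E[X|\pi_n]| \le |X| + k$ everywhere. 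For the a.s. convergence: for a.e. $\omega$ we have $|X(\omega)| \le 2^N$ eventually is false (it's fixed), but $|X(\omega)|$ is finite, so $\omega$ lies in one of the sets of type (i) or (ii) for all $n$ large enough (once $2^{-n}$ resolution applies at its magnitude scale), on which $|\mathbb E[X|\pi_n](\omega) - X(\omega)| \le 2^{-n} \to 0$. The main obstacle, as indicated, is precisely getting an \emph{additive} rather than multiplicative control on the large-value region while keeping finitely many partition members; isolating a fixed top tail $\{|X|>2^N\}$ and paying a one-time additive constant $k$ for it is the device that resolves this.
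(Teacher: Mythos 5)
Your diagnosis of where the difficulty lies is exactly right (the conditional average over the region where $|X|$ is large cannot be controlled additively by $|X|$ near its lower edge), but the final construction does not prove the lemma: it sacrifices the a.s.\ convergence to obtain the additive bound. In your step (iii) the set $\{|X|>2^N\}$ is a member of \emph{every} $\pi_n$, so $\mathbb{E}[X|\pi_n]\equiv c_N$ there for all $n$, and hence $\mathbb{E}[X|\pi_n](\omega)\not\to X(\omega)$ for every $\omega$ in that set with $X(\omega)\neq c_N$. Whenever $X$ is unbounded, $\mathbb{P}(|X|>2^N)>0$ for every fixed $N$, so this is a failure on a set of positive measure; your closing claim that every $\omega$ with $|X(\omega)|$ finite eventually lands in a piece of type (i) or (ii) is false precisely for $\omega\in\{|X|>2^N\}$. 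Letting $N=N(n)\to\infty$ restores convergence but destroys the uniform constant, since $|c_{N(n)}|\geq 2^{N(n)}\to\infty$. So the tension you identified is not actually resolved.

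The device that resolves it (and which the paper uses) is different: do not leave the far tail as its own partition member. Fix $k_1$ with $\mathbb{P}(|X|\leq k_1)>\tfrac12$ once and for all, and for each $\epsilon=\tfrac1n$ use nonatomicity to carve out a set $A\subset\{|X|\leq k_1\}$ with $\mathbb{P}(A)=\epsilon$, choose $k_2=k_2(\epsilon)$ with $\mathbb{E}[|X|\one_{\{|X|>k_2\}}]<\epsilon$, and take as the single ``bad'' partition member the \emph{union} $A\cup\{|X|>k_2\}$. This union has probability at least $\epsilon$ (from $A$) while $\mathbb{E}[|X|\one_{A\cup\{|X|>k_2\}}]<\epsilon(k_1+1)$, so the conditional average over it is bounded by $k_1+1$ --- an additive constant independent of $n$. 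Crucially, one does not need pointwise convergence on this bad piece for each $n$: its contribution to $\|\mathbb{E}[X|\pi_n]-X\|_1$ is $O(\epsilon)$, the bounded part $\{|X|\leq k_2\}\setminus A$ is handled by a fine partition as in your step (i), and a.s.\ convergence is then obtained by passing to a subsequence of an $L^1$-convergent sequence. Your construction insists on eventual pointwise membership in a good piece, which forces the fixed tail and the contradiction above; the padding-plus-$L^1$-subsequence argument is the missing idea.
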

\begin{proof}
Let $X\in L^1$ and  $k_1 \in \mathbb{R}_+$ be such that  $\mathbb{P}(|X|\leq k_1)>\frac{1}{2}$. Pick any $0<\epsilon<\frac{1}{2}$. Since $(\Omega,\mathcal{F},\mathbb{P})$ is nonatomic, we can find $A \in \mathcal{F}$  such that $A\subset \{|X|\leq k_1\}$ and $\mathbb{P}(A)=\epsilon$. 
In addition, since $X \in L^1$, we can find $k_2 \in \mathbb{R}$ such that $k_2>k_1$ and $\E[|X|\one_{|X|>k_2}]<\epsilon$. Put $\Omega'=\{|X| \leq k_2\}\setminus A$ and note that $\mathbb{P}(\Omega')=\mathbb{P}(|X| \leq k_2)-\mathbb{P}(A)\geq \mathbb{P}(|X|\leq k_1)-\mathbb{P}(A)>0$. For the localized probability space $(\Omega',\mathcal{F}_{|\Omega'},\mathbb{P}_{|\Omega'})$, where $\mathcal{F}_{|\Omega'}=\{E\in\cF:E\subset\Omega'\}$ and $\mathbb{P}_{|\Omega'}(E)=\frac{\bP(E)}{\bP(\Omega')}$, we have that $X\one_{\Omega'} \in L^\infty(\Omega',\mathcal{F}_{|\Omega'},\mathbb{P}_{|\Omega'})$. Applying (\cite{GLMX:18},  Lemma 3.1) to $L^\infty(\Omega',\mathcal{F}_{|\Omega'},\mathbb{P}_{|\Omega'})$, we can find a measurable partition $\{B_1,B_2,...,B_n\}$ of $\Omega'$ such that $\mathbb{P}(B_i)>0$ for each $i$ and  
\begin{equation}\label{Bounded3}
    \Bigabs{\sum_{i=1}^n \frac{\E[X\one_{B_i}]}{\mathbb{P}(B_i)}\one_{B_i}-X\one_{\Omega'}}<\epsilon \one_{\Omega'}.
\end{equation}

Note that $\mathbb{P}(\Omega\setminus \Omega')\geq \mathbb{P}(A)=\epsilon>0$. Thus $\pi=\{B_1,B_2,\dots, B_n,\Omega\setminus \Omega'\} \in \Pi$. Applying (\ref{Bounded3}), we obtain 
\begin{align}\label{Bounded1}
      |\mathbb{E}[X|\pi]\one_{\Omega'}|\leq&|\mathbb{E}[X|\pi]\one_{\Omega'}-X\one_{\Omega'}|+|X\one_{\Omega'}|=\Bigabs{\sum_{i=1}^n \frac{\E[X\one_{B_i}]}{\mathbb{P}(B_i)}\one_{B_i}-X\one_{\Omega'}}+|X\one_{\Omega'}|\\
\nonumber      <&\epsilon\one_{\Omega'}+|X|<\frac{1}{2}\one_{\Omega'}+|X|.
\end{align}
Now, we will find an upper bound for $|\mathbb{E}[X|\pi]\one_{\Omega \setminus \Omega'}|$. Note that 
\begin{equation}\label{Bounded0}
  \E[|X|\one_{\Omega \setminus \Omega'}]=\E[|X|\one_{A \cup\{ |X|>k_2\}}]=\E[|X|\one_{A}]+\E[|X|\one_{\{|X|>k_2\}}] <\epsilon(k_1+1).
\end{equation}
In view of  $\mathbb{P}(\Omega\setminus \Omega') \geq \epsilon$, we get that 
$$|\mathbb{E}[X|\pi]\one_{\Omega \setminus \Omega'}|\leq \frac{\mathbb{E}[|X| \one_{\Omega \setminus \Omega'}]}{\mathbb{P}(\Omega\setminus \Omega')}\one_{\Omega \setminus \Omega'} <(k_1+1)\one_{\Omega \setminus \Omega'}.$$
Therefore, by invoking ~(\ref{Bounded1}), it follows that
 \begin{equation}\label{Bounded5}
     |\mathbb{E}[X|\pi]|\leq |\mathbb{E}[X|\pi] \one_{\Omega'}| + |\mathbb{E}[X|\pi]\one_{\Omega\setminus \Omega'}|\leq |X|+k_1+1.
 \end{equation}
We next claim that 
\begin{equation}\label{Bounded6}
||\mathbb{E}[X|\pi]-X||_1 < \epsilon(3+2k_1).
\end{equation}
Indeed, by applying (\ref{Bounded3}) and (\ref{Bounded0}),  we get the following 
\begin{align*}
    ||\mathbb{E}[X|\pi]-X||_1&=||(\mathbb{E}[X|\pi]-X)\one_{\Omega'}||_1+||(\mathbb{E}[X|\pi]-X)\one_{\Omega \setminus \Omega'}||_1\\
    &\leq \epsilon+||\mathbb{E}[X|\pi]\one_{\Omega \setminus \Omega'}||_1+||X\one_{\Omega \setminus \Omega'}||_1=\epsilon+||\mathbb{E}[X\one_{\Omega \setminus \Omega'}|\pi]||_1+||X\one_{\Omega \setminus \Omega'}||_1\\
    & \leq \epsilon+2||X\one_{\Omega \setminus \Omega'}||_1 < \epsilon(3+2k_1).
\end{align*}

Now letting, e.g., $\ep=\frac{1}{n}$ in (\ref{Bounded5}) and (\ref{Bounded6}), we obtain $(\pi_n)_{n\in \N}\subset \Pi$ such that 
\begin{equation*}
    \sup_{n\in\N}|\mathbb{E}[X|\pi_{n}|\leq  |X|+(k_1+1)\one \quad\text{ and } \quad \mathbb{E}[X|\pi_n] \xrightarrow{{|| \cdot ||_1}} X. 
\end{equation*}
By passing to a subsequence, we may assume that  $\mathbb{E}[X|\pi_{n}]\xrightarrow{{a.s}} X$. 
\end{proof}


In general, the Fatou property is a weaker property than relative $||\cdot||_1$ lower semicontinuity. In particular, for $\cX=L^p,1\leq p<+\infty$, the Fatou property is equivalent to $|| \cdot||_p$ lower semicontinuity (see e.g. \cite{MR:09}). In the following proposition, we show that if $\rho$ is dilatation monotone, then the Fatou property is, in fact, equivalent to lower semicontinuity with respect to the  relative $\sigma(L^1,\mathcal{L})$ topology on $\cX$.

\begin{proposition}\label{starproperty}
Let $\rho:\cX \rightarrow [-\infty,\infty]$ be dilatation monotone. Then the following are equivalent
\begin{enumerate}
\item[(i)] $\rho$ has the  Fatou property. 
\item[(ii)] $\rho$ is lower semicontinuous with respect to the relative $\sigma(L^1,\mathcal{L})$ topology on $\cX$.
\item[(iii)] $\rho$ is lower semicontinuous with respect to the relative $||\cdot||_1$ topology on $\cX$.
\end{enumerate}
If any of the above holds, then for any $X \in \mathcal{X}$,
\begin{equation}\label{remark00011}
\rho(X)=\sup_{\pi\in\Pi}\rho(\E[X|\pi]),\end{equation}
and for any   $(\pi_n)_{n \in \mathbb{N}} \subset \Pi$, 
\begin{equation}\label{remark000}
\mathbb{E}[X|\pi_n] \xrightarrow{{\sigma(L^1,\mathcal{L})}} X \implies \rho(X)=\lim_n\rho(\mathbb{E}[X|\pi_n]) .\end{equation}
\end{proposition}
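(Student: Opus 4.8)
The plan is to prove the cycle of implications (i) $\Rightarrow$ (ii) $\Rightarrow$ (iii) $\Rightarrow$ (i), and then to read off \eqref{remark00011} and \eqref{remark000} from the equivalence together with dilatation monotonicity. Two of the three implications are soft. For (ii) $\Rightarrow$ (iii): since $\mathcal{L}\subseteq L^\infty$, convergence in $||\cdot||_1$ implies convergence in $\sigma(L^1,\mathcal{L})$, so the relative $\sigma(L^1,\mathcal{L})$ topology on $\cX$ is coarser than the relative $||\cdot||_1$ topology, whence every $\sigma(L^1,\mathcal{L})$-closed sublevel set is $||\cdot||_1$-closed. For (iii) $\Rightarrow$ (i): if $X_n\xrightarrow{a.s.}X$ with $g:=\sup_n|X_n|\in\mathcal{I}(\cX)\subseteq L^1$, then $|X_n|\le g\in L^1$ and dominated convergence gives $X_n\xrightarrow{||\cdot||_1}X$; since the $||\cdot||_1$ topology is metrizable, lower semicontinuity passes to its sequential formulation and yields $\rho(X)\le\liminf_n\rho(X_n)$.

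The heart of the matter is (i) $\Rightarrow$ (ii). Fix $\lambda\in\R$; I must show the relative sublevel set $\{\rho\le\lambda\}$ is $\sigma(L^1,\mathcal{L})$-closed, so I take $X\in\cX$ in its $\sigma(L^1,\mathcal{L})$-closure and a net $(X_\alpha)\subseteq\{\rho\le\lambda\}$ with $X_\alpha\to X$ in $\sigma(L^1,\mathcal{L})$. Applying Lemma~\ref{Mlemma1} produces $(\pi_n)\subseteq\Pi$ and $k\in\R_+$ with $\E[X|\pi_n]\xrightarrow{a.s.}X$ and $|\E[X|\pi_n]|\le|X|+k$. The strategy is to first establish $\rho(\E[X|\pi_n])\le\lambda$ for every fixed $n$, and then to conclude by a single application of the Fatou property along the sequence $(\E[X|\pi_n])_n$: these random variables are simple, hence lie in $\cX$, while $\sup_n|\E[X|\pi_n]|\le|X|+k\in\mathcal{I}(\{X,\one\})\subseteq\mathcal{I}(\cX)$ (as $X,\one\in\cX$), so $\rho(X)\le\liminf_n\rho(\E[X|\pi_n])\le\lambda$ and $X\in\{\rho\le\lambda\}$.

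To obtain $\rho(\E[X|\pi_n])\le\lambda$ for fixed $n$, write $\pi_n=\{A_1,\dots,A_m\}$. Testing the net against the simple functions $\one_{A_i}$ gives $\E[X_\alpha\one_{A_i}]\to\E[X\one_{A_i}]$, so the finitely many coefficients of $\E[X_\alpha|\pi_n]=\sum_i\frac{\E[X_\alpha\one_{A_i}]}{\mathbb{P}(A_i)}\one_{A_i}$ converge to those of $\E[X|\pi_n]$; in particular $\E[X_\alpha|\pi_n]\to\E[X|\pi_n]$ in $||\cdot||_1$, and since a convergent net of reals is eventually bounded, $|\E[X_\alpha|\pi_n]|\le M\one$ for some constant $M$ and all $\alpha$ past some $\alpha_0$. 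Because a convergent sequence can be extracted from a convergent net in a metric space, I pick indices $\alpha_j\ge\alpha_0$ with $Y_j:=\E[X_{\alpha_j}|\pi_n]\to\E[X|\pi_n]$ in $||\cdot||_1$; then $\rho(Y_j)\le\rho(X_{\alpha_j})\le\lambda$ by dilatation monotonicity, $|Y_j|\le M\one$, and after passing to a subsequence I may assume $Y_j\xrightarrow{a.s.}\E[X|\pi_n]$. Since $\sup_j|Y_j|\le M\one\in\mathcal{I}(\cX)$, the Fatou property yields $\rho(\E[X|\pi_n])\le\liminf_j\rho(Y_j)\le\lambda$. I expect this paragraph to be the main obstacle: the Fatou property is intrinsically a sequential, dominated-convergence-type statement whose dominating function must lie in $\mathcal{I}(\cX)$ — note that $\mathcal{I}(\mathcal{L})=L^\infty$, so a bare $L^1$ bound would be useless — whereas the closure hypothesis only supplies a $\sigma(L^1,\mathcal{L})$-convergent net; the real work is in descending from that net to a dominated a.s.-convergent sequence, which is where both Lemma~\ref{Mlemma1} and the finite-dimensionality of $L^1(\sigma(\pi_n))$ enter.

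Finally, \eqref{remark00011} and \eqref{remark000} come cheaply from the equivalence. For \eqref{remark00011}: dilatation monotonicity gives $\sup_{\pi\in\Pi}\rho(\E[X|\pi])\le\rho(X)$, while Lemma~\ref{Mlemma1} furnishes $(\pi_n)$ with $\E[X|\pi_n]\xrightarrow{a.s.}X$ and $\sup_n|\E[X|\pi_n]|\in\mathcal{I}(\cX)$, so the Fatou property gives $\rho(X)\le\liminf_n\rho(\E[X|\pi_n])\le\sup_{\pi\in\Pi}\rho(\E[X|\pi])$. For \eqref{remark000}: if $\E[X|\pi_n]\to X$ in $\sigma(L^1,\mathcal{L})$, then (ii) in its sequential form — valid since $\sigma(L^1,\mathcal{L})$-closed sets are sequentially closed — gives $\rho(X)\le\liminf_n\rho(\E[X|\pi_n])$, whereas dilatation monotonicity gives $\rho(\E[X|\pi_n])\le\rho(X)$ for every $n$; hence $\limsup_n\rho(\E[X|\pi_n])\le\rho(X)\le\liminf_n\rho(\E[X|\pi_n])$, so the limit exists and equals $\rho(X)$.
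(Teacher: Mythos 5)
Your proof is correct and follows essentially the same route as the paper: Lemma~\ref{Mlemma1} for the approximating partitions, testing the net against indicators to get $\|\cdot\|_1$-convergence of the conditional expectations, extraction of a dominated a.s.-convergent sequence, and two applications of the Fatou property combined with dilatation monotonicity. The only (harmless) difference is in how you dominate the extracted sequence: you use eventual boundedness of the finitely many coefficient nets to get a uniform bound $M\one$, whereas the paper uses a summable-series argument inside the finite-dimensional space $\mathcal{R}_\pi$; both work.
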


\begin{proof}
$(i) \Rightarrow (ii). $ Let $\{\rho \leq \lambda\}$ be a sublevel set and  $(X_{\alpha}) \subset \{\rho \leq \lambda\}$ be any net such that $X_{\alpha} \xrightarrow{{\sigma(L^1,\mathcal{L})}} X \in \mathcal{X}$.  By Lemma~\ref{Mlemma1}, we can find $k \in \mathbb{R}_+$ and $(\pi_m)_{m \in \mathbb{N}}  \subset \Pi$ such that $\mathbb{E}[X|\pi_m]\xrightarrow{{a.s.}} X$ and $\sup_{m \in \mathbb{N}}|\mathbb{E}[X|\pi_m]| \leq |X|+k$. By ($\star$), we have that 
\begin{equation}\label{eq201}\rho(X) \leq \liminf_m \rho(\mathbb{E}[X|\pi_m]).\end{equation}

Note that for any $\pi=\{A_1,...,A_k\} \in \Pi$, $\mathbb{E}[X_{\alpha}|\pi] \xrightarrow{{||\cdot||_1}} \mathbb{E}[X|\pi]$. Indeed, we have that 
\begin{align*}
    \Bignorm{\mathbb{E}[X_{\alpha}|\pi]-\mathbb{E}[X|\pi]}_1=&\bignorm{\mathbb{E}[X-X_\alpha|\pi]}_1
=\Bignorm{\sum_{i=1}^k\frac{\mathbb{E}[X_{\alpha}\one_{A_i}-X\one_{A_i}]}{\mathbb{P}(A_i)}\one_{A_i}}_1\\
\leq &\mathbb{E}\Big[\sum_{i=1}^k\Bigabs{\frac{\mathbb{E}[X_{\alpha}\one_{A_i}-X\one_{A_i}]}{\mathbb{P}(A_i)}}\one_{A_i}\Big]=\sum_{i=1}^k |\mathbb{E}[X_a\one_{A_i}-X\one_{A_i}]|\rightarrow 0 .
\end{align*}
Thus, we can choose $(\alpha_n)_{n \in \mathbb{N}}$ such that $||\mathbb{E}[X_{\alpha_n}|\pi]-\mathbb{E}[X|\pi]||_1\leq \frac{1}{2^n}$ for all $n$. Then  $Y:=\sum_{k=1}^\infty|\mathbb{E}[X_{\alpha_k}-X|\pi]|\in L^1$. Note that for any $n \in \mathbb{N}$, $\sum_{k=1}^n|\mathbb{E}[X_{\alpha_k}-X|\pi]| \in \mathcal{R}_{\pi}$, the space of random variables measurable with respect to $\sigma(\pi)$. Since $\mathcal{R}_{\pi}$ is a finite dimensional subspace of $L^1$, it follows that $\mathcal{R}_{\pi}$ is a closed subspace of $L^1$ and thus   $Y \in \mathcal{R}_{\pi}$. Moreover,
$$|\mathbb{E}[X_{\alpha_n}|\pi]-\mathbb{E}[X|\pi]|\leq \sum_{k=n}^\infty |\mathbb{E}[X_{\alpha_k}|\pi]-\mathbb{E}[X|\pi]| \xrightarrow{{a.s.}} 0,$$
$$\sup_{n \in \mathbb{N}}\{|\mathbb{E}[X_{\alpha_n}|\pi]|\}\leq Y+|\mathbb{E}[X|\pi]| \in \mathcal{L}\subset  \cX \subset \mathcal{I}(\mathcal{X}).$$
Therefore, by applying again ($\star$) to $\pi_m$, we get that, for any $m\in\N$,
$$\rho(\mathbb{E}[X|\pi_m]) \leq \liminf_n\rho(\mathbb{E}[X_{\alpha_n}|\pi_m]) .$$
Applying  dilatation monotonicity of $\rho$ to the right hand side, we have that, for any $m\in\N$,
$$\rho(\mathbb{E}[X|\pi_m])\leq \liminf_n \rho(X_{\alpha_n}).$$
Thus,  by ~(\ref{eq201}), we get that $\rho(X) \leq \liminf_n \rho(X_{\alpha_n})\leq \lambda$ and thus $X \in \{\rho \leq \lambda\}$.

$(ii) \Rightarrow (iii)$. This is immediate since the $\sigma(L^1,\mathcal{L})$ topology is weaker than the $||\cdot||_1$ topology.

$(iii) \Rightarrow (i).$ Let $(X_n)_{n \in \mathbb{N}} \subset \cX$ and $X \in \cX$ be such that $X_n \xrightarrow{{a.s.}} X$ and $ \sup_{n \in \mathbb{N}}|X_n| \in \mathcal{I}(\mathcal{X}) $. By the Dominated Convergence Theorem, we have that $X_n \xrightarrow{{||\cdot||_1}} X$ and (iii) yields that $\rho(X)\leq \liminf_n \rho(X_n)$.

To verify~(\ref{remark000}), let $\mathbb{E}[X|\pi_n] \xrightarrow{\sigma(L^1,\mathcal{L})} X$ and note that by (ii) and the dilatation monotonicity  of $\rho$, we have that  $\rho(X) \leq \liminf_n \rho(\mathbb{E}[X|\pi_n]) \leq \limsup_n \rho(\mathbb{E}[X|\pi_n]) \leq \rho(X)$. In particular, it follows that $\rho(X)=\lim_n\rho(\mathbb{E}[X|\pi_n])$. (\ref{remark00011}) follows by the dilatation monotonicity  of $\rho$, (i) and Lemma~\ref{Mlemma1}.
\end{proof}

Now, we turn to the extension problem. We are motivated by (\ref{remark00011}) to define the following functional associated with $\rho$.
\begin{equation}\label{eq1}\overline{\rho}(X)=\sup_{\pi\in\Pi}\rho(\E[X|\pi]),\quad X \in L^1.\end{equation}

\begin{theorem}\label{main_thm2}
Let $\rho: \mathcal{X} \rightarrow (-\infty,\infty]$ be a dilatation monotone functional with the Fatou property. Then $\overline{\rho}: L^1 \rightarrow (-\infty,\infty]$ is the unique extension of $\rho$ to $L^1$ that is dilatation monotone and  $\sigma(L^1,\mathcal{L})$ lower semicontinuous. Moreover, $\overline{\rho}$ is law-invariant and preserves monotonicity, (quasi)convexity, and cash-additivity of $\rho$.
\end{theorem}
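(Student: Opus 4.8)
The plan is to verify the claim in five stages: $\overline{\rho}$ extends $\rho$, it is dilatation monotone, it is $\sigma(L^1,\mathcal{L})$ lower semicontinuous, it is the unique such extension, and finally it inherits the listed structural properties.

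\textbf{Extension and dilatation monotonicity.} First I would show $\overline{\rho}|_{\cX}=\rho$. For $X\in\cX$, since $\one\in\mathcal{L}\subset\cX$ and $\cX$ is a subset of $L^1$ closed under conditional expectation against finite partitions only insofar as $\E[X|\pi]$ always lies in $\mathcal{L}\subset\cX$, the supremum in \eqref{eq1} ranges over values $\rho(\E[X|\pi])$ that are all genuinely defined; by dilatation monotonicity each is $\le\rho(X)$, and taking $\pi$ to be the trivial-plus-refinement partitions one gets $\ge\rho(X)$ as well — this is exactly \eqref{remark00011} of Proposition~\ref{starproperty}. So $\overline{\rho}=\rho$ on $\cX$. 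For dilatation monotonicity of $\overline{\rho}$ on all of $L^1$: given $X\in L^1$ and $\pi_0\in\Pi$, every $\pi\in\Pi$ refining $\pi_0$ satisfies $\E[\E[X|\pi_0]\,|\,\pi]=\E[X|\pi_0]$ when $\pi$ refines $\pi_0$ (tower property), and more to the point $\{\E[\,\E[X|\pi_0]\,|\,\pi]:\pi\in\Pi\}\subseteq\{\E[X|\sigma(\pi)\vee\sigma(\pi_0)]:\pi\in\Pi\}$, so the sup defining $\overline{\rho}(\E[X|\pi_0])$ is over a subfamily of partitions compared to $\overline{\rho}(X)$ (after passing to common refinements, using that $\E[\E[X|\pi_0]|\pi]=\E[X|\pi\vee\pi_0]$ when... ) — cleanly: for any $\pi$, $\E[\,\E[X|\pi_0]\,|\,\pi]=\E[X\,|\,\sigma(\pi_0)\cap\sigma(\pi)]$ only if the algebras are nested, so instead I argue $\E[\,\E[X|\pi_0]\,|\,\pi]=\E[\,\E[X|\pi_0\vee\pi]\,|\,\pi]$ which is a conditional expectation of an element of the form $\E[X|\pi']$; this requires a short lemma that $\overline\rho$ is monotone under conditioning, which follows because $\rho$ is dilatation monotone and $\E[\,\E[X|\pi']\,|\,\pi]=\E[X|\pi]$ whenever $\pi'$ refines $\pi$. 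I'd state and dispatch this as: $\overline{\rho}(\E[X|\pi_0])=\sup_{\pi}\rho(\E[\,\E[X|\pi_0]\,|\,\pi])\le\sup_{\pi\supseteq\pi_0}\rho(\E[X|\pi])\le\overline\rho(X)$.

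\textbf{Lower semicontinuity and uniqueness.} Since $\overline{\rho}$ is dilatation monotone on $L^1$ and agrees with $\rho$ on $\cX\supseteq\mathcal{L}$, I want to apply Proposition~\ref{starproperty} with $\cX$ replaced by $L^1$: it suffices to check $\overline{\rho}$ has the Fatou property on $L^1$. So let $X_n\xrightarrow{a.s.}X$ with $\sup_n|X_n|\in\mathcal{I}(\{X_n\}\cup\ldots)=L^1$ (automatic here since the ideal generated by anything inside $L^1$ is inside $L^1$). Fix $\pi\in\Pi$; then $\E[X_n|\pi]\xrightarrow{\|\cdot\|_1}\E[X|\pi]$ by dominated convergence, hence (finite-dimensionality of $\mathcal{R}_\pi$) a.s. along a subsequence, with a dominating function in $\mathcal{L}\subset\cX$, so the Fatou property of $\rho$ gives $\rho(\E[X|\pi])\le\liminf_n\rho(\E[X_n|\pi])\le\liminf_n\overline{\rho}(X_n)$; taking sup over $\pi$ yields $\overline{\rho}(X)\le\liminf_n\overline\rho(X_n)$. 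Then Proposition~\ref{starproperty} (applied on $L^1$) gives $\sigma(L^1,\mathcal{L})$ lower semicontinuity. For uniqueness: if $\sigma$ is another dilatation monotone, $\sigma(L^1,\mathcal{L})$-lsc extension, then by \eqref{remark00011} applied to $\sigma$ on $L^1$ we get $\sigma(X)=\sup_\pi\sigma(\E[X|\pi])=\sup_\pi\rho(\E[X|\pi])=\overline{\rho}(X)$, since $\E[X|\pi]\in\mathcal{L}\subseteq\cX$ where $\sigma=\rho$.

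\textbf{Structural properties.} Law invariance: if $X,Y\in L^1$ are equidistributed, then for each $\pi\in\Pi$ there is $\pi'\in\Pi$ with $\E[Y|\pi']$ equidistributed with $\E[X|\pi]$ — standard on a nonatomic space, build a measure-preserving map pushing $\pi$ to $\pi'$ — and then one uses law invariance of $\rho$ on $\cX$ together with symmetry in $X,Y$; here the only subtlety is whether law invariance was assumed (the theorem's hypothesis doesn't list it), so I'd note that dilatation monotonicity plus the Fatou property is known to imply law invariance on suitable domains, or more carefully the statement should be read as ``if $\rho$ is additionally law invariant'' — I expect this is the delicate point and would check the precise hypotheses in \cite{LS:19,CGX:18}; assuming law invariance of $\rho$, the argument is as sketched. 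Monotonicity: $X\ge Y$ implies $\E[X|\pi]\ge\E[Y|\pi]$ for all $\pi$, so $\rho(\E[X|\pi])\le\rho(\E[Y|\pi])$ and taking suprema gives $\overline{\rho}(X)\le\overline{\rho}(Y)$. Convexity/quasiconvexity: $\E[\lambda X+(1-\lambda)Y|\pi]=\lambda\E[X|\pi]+(1-\lambda)\E[Y|\pi]\in\mathcal{L}$, so $\rho(\E[\lambda X+(1-\lambda)Y|\pi])\le\lambda\rho(\E[X|\pi])+(1-\lambda)\rho(\E[Y|\pi])\le\lambda\overline\rho(X)+(1-\lambda)\overline\rho(Y)$ (resp. $\le\max$), then sup over $\pi$; note no restriction ``$\lambda X+(1-\lambda)Y\in\cX$'' is needed anymore since the target is all of $L^1$. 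Cash-additivity: $\E[X+s|\pi]=\E[X|\pi]+s$, and $\rho(\E[X|\pi]+s)=\rho(\E[X|\pi])-s$ since $\E[X|\pi]\in\mathcal{L}$ and $\one\in\mathcal{L}$, so $\overline\rho(X+s)=\sup_\pi(\rho(\E[X|\pi])-s)=\overline\rho(X)-s$.

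\textbf{Main obstacle.} The bookkeeping is all routine except two points deserving care: (a) making the dilatation-monotonicity argument for $\overline\rho$ on $L^1$ airtight, i.e.\ that conditioning $\E[X|\pi_0]$ against an arbitrary $\pi$ yields something of the form $\E[X|\pi']$ up to a further conditioning that only decreases $\rho$ — this needs the observation that $\overline{\rho}$ is monotone under the refinement order on partitions, proved directly from dilatation monotonicity of $\rho$; and (b) the law-invariance claim, whose clean proof requires transporting finite partitions by measure-preserving maps on the nonatomic space and — if law invariance is not among the hypotheses on $\rho$ — invoking the fact that Fatou plus dilatation monotonicity already pins down enough structure, or else reading the theorem as carrying law invariance as an implicit standing assumption.
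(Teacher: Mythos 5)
Your route is essentially the paper's: define $\overline{\rho}$ by \eqref{eq1}, use \eqref{remark00011} to see that it extends $\rho$, establish $||\cdot||_1$ lower semicontinuity termwise and upgrade to $\sigma(L^1,\mathcal{L})$ lower semicontinuity by applying Proposition~\ref{starproperty} on $L^1$, deduce uniqueness from \eqref{remark00011}, and verify the structural properties by passing the defining supremum through $\E[\,\cdot\,|\pi]$. One step is needlessly convoluted: for dilatation monotonicity of $\overline{\rho}$ no refinement or tower-property analysis is required, since $\E[X|\pi_0]\in\mathcal{L}\subset\cX$ and $\overline{\rho}$ agrees with $\rho$ on $\cX$, whence $\overline{\rho}(\E[X|\pi_0])=\rho(\E[X|\pi_0])\leq\sup_{\pi\in\Pi}\rho(\E[X|\pi])=\overline{\rho}(X)$ immediately. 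Your intermediate musings (e.g.\ that $\E[\E[X|\pi_0]\,|\,\pi]$ should be of the form $\E[X|\pi']$) are not correct for general $\pi$, although your final displayed chain is salvageable by applying dilatation monotonicity of $\rho$ to $\E[X|\pi_0]\in\cX$. Similarly, in the lower semicontinuity step the dominating function for $(\E[X_n|\pi])_n$ is automatically a bounded simple function in $\mathcal{R}_\pi$, so no passage to a subsequence is needed; alternatively one can just invoke the equivalence (i)$\Leftrightarrow$(iii) of Proposition~\ref{starproperty} for $\rho$ on $\cX$.

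The genuine gap is law invariance. The theorem does not assume $\rho$ is law invariant, and the conclusion that $\overline{\rho}$ is law invariant is unconditional, so your fallback of reading law invariance as an implicit standing hypothesis is not available, and your transport-of-partitions sketch (which uses law invariance of $\rho$) does not apply. The correct resolution, which is the one the paper uses, is that any dilatation monotone, $||\cdot||_1$ lower semicontinuous functional defined on all of $L^1$ over a nonatomic space is automatically law invariant; the paper cites (\cite{LS:19}, Theorem 18(ii)) for this (the phenomenon goes back to \cite{CG:07}). You correctly suspect this direction (``dilatation monotonicity plus the Fatou property is known to imply law invariance''), but you neither commit to it nor supply an argument or reference that closes it, so as written the law-invariance claim remains unproven in your proposal. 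Everything else --- extension, lower semicontinuity, uniqueness, and the preservation of monotonicity, (quasi)convexity and cash-additivity --- is correct and coincides with the paper's proof.
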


\begin{proof}In view of (\ref{remark00011}), it is immediate to see that $\overline{\rho}$ agrees with $\rho$  on $\cX$. Thus for any $X \in L^1$ and $\pi \in \Pi$, since $\E[X|\pi]\in \mathcal{L}\subset \cX$, we get that $$\overline{\rho}(\mathbb{E}[X|\pi])=\rho(\mathbb{E}[X|\pi])\leq \overline{\rho}(X).$$
This establishes  dilatation monotonicity of $\overline{\rho}$ on $L^1$. Since $\rho>-\infty$ on $\cX$, it also follows that $\overline{\rho}>-\infty $ on $L^1$. 
Next, we  show that $\overline{\rho}$ is $||\cdot||_1$ lower semicontinuous, and thus by Proposition~\ref{starproperty}, we get that $\overline{\rho}$ is $\sigma(L^1,\mathcal{L})$ lower semicontinuous. Indeed, let $X_n \xrightarrow{|| \cdot ||_1} X$ and fix some $\pi \in \Pi$. Then $\mathbb{E}[X_n|\pi] \xrightarrow{|| \cdot ||_1} \mathbb{E}[X|\pi]$. Therefore, by the Fatou property of $\rho$ and the definition of $\overline{\rho}$,
$$\rho(\mathbb{E}[X|\pi]) \leq \liminf_n \rho(\mathbb{E}[X_n|\pi])\leq \liminf_n \overline{\rho}(X_n).$$
Taking supremum over $\pi$, we get that $\overline{\rho}(X) \leq \liminf_n \overline{\rho}(X_n)$ and thus $\overline{\rho}$ is $||\cdot||_1$ lower semicontinuous. The uniqueness of $\overline{\rho}$ is immediate by ~(\ref{remark00011}). 

The law-invariance of $\overline{\rho}$ follows by (\cite{LS:19}, Theorem 18(ii)).
Assume that $\rho$ is quasiconvex. Pick any $X_1,X_2 \in L^1$ and $ \lambda \in (0,1)$. Then \begin{align*}
\overline{\rho} (\lambda X_1+(1-\lambda) X_2)&=\sup_{\pi\in\Pi}\rho(\E[\lambda X_1+(1-\lambda) X_2|\pi])=\sup_{\pi\in\Pi}\rho(\lambda \E[X_1|\pi]+(1-\lambda)\E[ X_2|\pi])\\
\leq &\sup_{\pi\in\Pi}\max\{\rho(\mathbb{E}[X_1|\pi]),\rho(\mathbb{E}[X_2|\pi)\}\Big) 
\leq \max\{\overline{\rho}(X_1),\overline{\rho}(X_2)\}. 
\end{align*}
Hence, $\overline{\rho}$ is quasiconvex as well.
In a similar manner, one shows that $\overline{\rho}$ preserves convexity, monotonicity and cash-additivity of $\rho$.
\end{proof}

To present our applications to the study of quasiconvex law-invariant functionals, we need to apply the following result from \cite{CGX:18}. In particular, the proof of (\cite{CGX:18}, Proposition 2.2) yields the following construction, which we isolate here.

\begin{lemma}(\cite{CGX:18})\label{CGX:18prop}
For any $X\in L^1$ and $\pi \in \Pi$, there exists $k \in \mathbb{R}_+$ and  $X_{n,j}, j=1,...,N_n,n \in \mathbb{N}$ such that $X_{n,j}\one_{\{|X|\leq n\}}$ has the same distribution as $X \one_{\{|X|\leq n\}},X_{n,j}\one_{\{|X|>n\}}=X\one_{\{|X|>n\}}$ for each $j,n$ and  $$\frac{1}{N_n}\sum_{j=1}^{N_{n}} X_{n,j} \xrightarrow{a.s.} \mathbb{E}[X|\pi]  \text{ and } |\frac{1}{N_n}\sum_{j=1}^{N_{n}} X_{n,j}| \leq |X|+|\mathbb{E}[X|\pi]|+k, \quad\text{ for all } n \in \mathbb{N}.$$
\end{lemma}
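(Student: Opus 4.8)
The plan is to reconstruct the construction carried out inside the proof of (\cite{CGX:18}, Proposition~2.2). The idea is to realise the dilation $\E[X\mid\pi]$ as an almost sure limit of averages of measure-preserving rearrangements of $X$ that are left untouched above a truncation level $n$, keeping all estimates uniform in $n$; the nonatomicity of $(\Omega,\cF,\mathbb{P})$ is what makes the rearrangements available.

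Concretely, I would fix $X\in L^1$ and $\pi=\{A_1,\dots,A_m\}\in\Pi$, choose $n_0\in\N$ large enough that $\mathbb{P}(A_i\cap\{|X|\le n_0\})>0$ for every $i$, and set $N_n=1$, $X_{n,1}=X$ for $n<n_0$. For $n\ge n_0$, put $c_i^{\,n}:=\frac{1}{\mathbb{P}(A_i\cap\{|X|\le n\})}\int_{A_i\cap\{|X|\le n\}}X$ and $V_n:=X\one_{\{|X|>n\}}+\sum_{i=1}^m c_i^{\,n}\one_{A_i\cap\{|X|\le n\}}$. The heart of the argument will be to produce, for each $n\ge n_0$, an integer $N_n$ and measure-preserving automorphisms $\tau^n_1,\dots,\tau^n_{N_n}$ of $(\Omega,\cF,\mathbb{P})$ that map each $A_i\cap\{|X|\le n\}$ onto itself and restrict to the identity on $\{|X|>n\}$, such that
\[
\Bignorm{\frac{1}{N_n}\sum_{j=1}^{N_n}X\circ\tau^n_j-V_n}_\infty<\frac1n .
\]
One then defines $X_{n,j}:=X\circ\tau^n_j$. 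Since $\tau^n_j$ is measure-preserving and preserves the set $\{|X|\le n\}$, one has $\{|X_{n,j}|\le n\}=\{|X|\le n\}$, hence $X_{n,j}\one_{\{|X|\le n\}}=(X\one_{\{|X|\le n\}})\circ\tau^n_j$ has the same law as $X\one_{\{|X|\le n\}}$; and since $\tau^n_j$ is the identity on $\{|X|>n\}$, one gets $X_{n,j}\one_{\{|X|>n\}}=X\one_{\{|X|>n\}}$. The first two requirements of the statement are then clear (and trivial for $n<n_0$).

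The remaining two assertions become bookkeeping. Because $\tau^n_j$ is the identity on $\{|X|>n\}$ we have $\frac1{N_n}\sum_jX_{n,j}=X\one_{\{|X|>n\}}+\frac1{N_n}\sum_j(X\circ\tau^n_j)\one_{\{|X|\le n\}}$, so the displayed estimate reads $\bignorm{\frac1{N_n}\sum_jX_{n,j}-V_n}_\infty<\frac1n$. Now $X\one_{\{|X|>n\}}\to 0$ a.s., $\one_{A_i\cap\{|X|\le n\}}\to\one_{A_i}$ a.s., and $c_i^{\,n}\to\frac{1}{\mathbb{P}(A_i)}\int_{A_i}X$ by dominated convergence, so $V_n\to\sum_i\frac{1}{\mathbb{P}(A_i)}\int_{A_i}X\,\one_{A_i}=\E[X\mid\pi]$ a.s.; hence $\frac1{N_n}\sum_jX_{n,j}\to\E[X\mid\pi]$ a.s. For the pointwise bound, $|c_i^{\,n}|\le\norm{X}_1/\mathbb{P}(A_i\cap\{|X|\le n_0\})$ is dominated uniformly in $n\ge n_0$ by a constant $k'$, so
\[
\Bigabs{\frac1{N_n}\sum_jX_{n,j}}\le|X|\,\one_{\{|X|>n\}}+k'+\frac1n\le|X|+|\E[X\mid\pi]|+(k'+1),
\]
and the same bound holds for $n<n_0$; so one takes $k=k'+1$.

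Everything therefore reduces to the displayed $L^\infty$-estimate, and this is the step I expect to be the real obstacle: the finite average must be \emph{uniformly} close to the conditional mean on each atom, not merely convergent almost everywhere. The approach I would take: fix $n\ge n_0$ and a block $A_i\cap\{|X|\le n\}$; using nonatomicity, identify it via a measure-preserving isomorphism with an interval $I$ on which $X$ becomes its decreasing rearrangement $h_i$, a non-increasing function bounded by $n$ with mean $c_i^{\,n}$. Monotonicity of $h_i$ forces its equally spaced Riemann sums to converge to the mean \emph{uniformly}, with error at most $2n/N$ over $N$ equally spaced nodes. Transporting the $N$ shifts of $I$ back gives measure-preserving automorphisms of the block; letting $\tau^n_j$ act this way on every block at once with a common $N=N_n$ (say $N_n=2n^2$) and by the identity on $\{|X|>n\}$ yields automorphisms of $(\Omega,\cF,\mathbb{P})$ satisfying $\bignorm{\frac1{N_n}\sum_jX\circ\tau^n_j-V_n}_\infty\le 2n/N_n\le 1/n$, as needed. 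Once this uniform-averaging lemma is available, the rest is routine; I would regard it --- along with the fact that forcing the constant $k$ to be independent of $n$ is exactly what makes the truncation split $\{|X|\le n\}$ versus $\{|X|>n\}$ appear in the statement --- as the only genuine difficulty.
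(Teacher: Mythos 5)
The paper does not actually prove this lemma; it only records the construction extracted from the proof of (\cite{CGX:18}, Proposition 2.2), so your proposal has to be judged against that construction --- and it reconstructs it correctly and in essentially the same way. The skeleton (truncate at level $n$, leave $X\one_{\{|X|>n\}}$ untouched, replace $X$ on each block $A_i\cap\{|X|\le n\}$ by rearrangements \emph{within that block} whose average is uniformly within $1/n$ of the block mean $c_i^{\,n}$, then let $n\to\infty$) is exactly what yields the two structural conditions on $X_{n,j}$, the a.s.\ convergence to $\E[X|\pi]$, and the uniform bound; your Riemann-sum estimate $2n/N_n$ for monotone functions bounded by $n$ is the correct quantitative input, and the bound $|X|+k'+1$ you obtain is in fact slightly sharper than the stated $|X|+|\E[X|\pi]|+k$. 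The remaining bookkeeping ($V_n\to\E[X|\pi]$ a.s., the uniform control $|c_i^{\,n}|\le k'$ for $n\ge n_0$, the trivial choice for $n<n_0$) is all in order.

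One step should be reformulated, and it is precisely the one you single out as the obstacle. You realize the rearrangements as $X\circ\tau^n_j$ for measure-preserving \emph{automorphisms} obtained by transporting cyclic shifts of an interval through a ``measure-preserving isomorphism'' of each block with an interval. The paper assumes only that $(\Omega,\cF,\bP)$ is nonatomic, not that it is a standard (Lebesgue) space, and a general nonatomic probability space need not admit such point isomorphisms, nor automorphisms realizing a prescribed rearrangement. The repair is routine and does not disturb any estimate: on each block $B_i=A_i\cap\{|X|\le n\}$ use nonatomicity to build $U_i$ uniform on $[0,1]$ under $\bP|_{B_i}$ with $X=h_i(U_i)$ a.s.\ on $B_i$, where $h_i$ is the (monotone, bounded by $n$) quantile function of $X|_{B_i}$; then set $X_{n,j}:=h_i\bigl((U_i+j/N_n)\bmod 1\bigr)$ on $B_i$ and $X_{n,j}:=X$ on $\{|X|>n\}$. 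Each $X_{n,j}\one_{\{|X|\le n\}}$ is equidistributed with $X\one_{\{|X|\le n\}}$ block by block, and the average over $j$ is the same equally spaced Riemann sum of $h_i$, so the $2n/N_n$ bound persists. With this rewording your argument is complete.
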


\begin{corollary}\label{lawcor}
Let $\cX$ be a convex subset of $L^1$ with the following properties

\begin{itemize}
\item[(i)] $\cX+\cX \subset \cX$,
\item[(ii)] $L^\infty \subset \cX$ and $X \one_{A} \in \cX$ for each $X \in \cX$ and $A \in \mathcal{F}$.
\end{itemize}

Then for any quasiconvex law-invariant $\rho:\cX \rightarrow (-\infty,\infty]$ with the Fatou property, we have that $\overline{\rho}:L^1 \rightarrow (-\infty,\infty]$, defined in (\ref{eq1}), agrees with $\rho$ on $\cX$, is quasiconvex, law-invariant, dilatation monotone and $\sigma(L^1,\mathcal{L})$ lower semicontinuous. Moreover, $\overline{\rho}$ preserves monotonicity, convexity, and cash-additivity of $\rho$.
\end{corollary}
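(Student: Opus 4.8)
The plan is to reduce Corollary~\ref{lawcor} to Theorem~\ref{main_thm2} by verifying that the hypotheses of that theorem are met, namely that $\rho$ is dilatation monotone on $\cX$. Since $\cX$ is convex by assumption, $\rho$ is already quasiconvex and law-invariant, so the only missing ingredient is dilatation monotonicity; once we have it, Theorem~\ref{main_thm2} immediately gives that $\overline{\rho}$ agrees with $\rho$ on $\cX$, is dilatation monotone, $\sigma(L^1,\mathcal{L})$ lower semicontinuous, law-invariant, and preserves monotonicity, convexity, quasiconvexity, and cash-additivity. Note first that the structural hypotheses guarantee $\cX \subset L^1$ contains $\mathcal{L}$: indeed $L^\infty \subset \cX$ already contains $\mathcal{L}$, so the standing assumption of the section is satisfied and $\overline{\rho}$ is well defined.

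The core step is thus: \emph{a quasiconvex, law-invariant $\rho$ with the Fatou property on a domain $\cX$ satisfying (i) and (ii) is dilatation monotone}. Fix $X \in \cX$ and $\pi \in \Pi$; we must show $\rho(\E[X|\pi]) \le \rho(X)$. I would invoke Lemma~\ref{CGX:18prop}: it produces $k \in \R_+$ and, for each $n$, random variables $X_{n,1},\dots,X_{n,N_n}$ with $X_{n,j}\one_{\{|X|\le n\}}$ equidistributed with $X\one_{\{|X|\le n\}}$ and $X_{n,j}\one_{\{|X|>n\}} = X\one_{\{|X|>n\}}$, such that $Y_n := \frac{1}{N_n}\sum_{j=1}^{N_n} X_{n,j} \xrightarrow{a.s.} \E[X|\pi]$ and $|Y_n| \le |X| + |\E[X|\pi]| + k$. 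The key observations are: (a) each $X_{n,j} \in \cX$, because $X_{n,j} = X_{n,j}\one_{\{|X|\le n\}} + X\one_{\{|X|>n\}}$, where $X_{n,j}\one_{\{|X|\le n\}}$ has the same law as the bounded variable $X\one_{\{|X|\le n\}} \in L^\infty \subset \cX$ — hence lies in $\cX$ by law-invariance of the domain? — more carefully, since $X\one_{\{|X|\le n\}}$ is bounded it is in $\cX$ by (ii), and $X_{n,j}\one_{\{|X|\le n\}}$, being equidistributed with it, is bounded too, so it is in $L^\infty \subset \cX$; and $X\one_{\{|X|>n\}} = X\one_A \in \cX$ by (ii) with $A = \{|X|>n\}$; then by (i), $\cX + \cX \subset \cX$ gives $X_{n,j} \in \cX$. (b) By convexity of $\cX$, the average $Y_n \in \cX$, and iterating quasiconvexity, $\rho(Y_n) \le \max_{1\le j\le N_n}\rho(X_{n,j})$. (c) By law-invariance of $\rho$, $\rho(X_{n,j}) = \rho(X_{n,1})$ for all $j$? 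Not quite — we only know $X_{n,j}\one_{\{|X|\le n\}}$ is equidistributed across $j$, and the off-part $X\one_{\{|X|>n\}}$ is common, so the full variables $X_{n,j}$ are equidistributed in $j$ for fixed $n$, hence $\rho(X_{n,j}) = \rho(X_{n,1})$; moreover $X_{n,1}$ is equidistributed with $X$ on $\{|X|\le n\}$ and equals $X$ on $\{|X|>n\}$, so $X_{n,1}$ has the same law as $X$, giving $\rho(X_{n,1}) = \rho(X)$. Combining, $\rho(Y_n) \le \rho(X)$ for every $n$. Finally, $Y_n \xrightarrow{a.s.} \E[X|\pi]$ with $\sup_n |Y_n| \le |X| + |\E[X|\pi]| + k \in \mathcal{I}(\cX)$ (since $|X|$ and $|\E[X|\pi]|$ are dominated by elements of $\cX$, and $k = k\one \in \cX$), so the Fatou property ($\star$) yields $\rho(\E[X|\pi]) \le \liminf_n \rho(Y_n) \le \rho(X)$, which is dilatation monotonicity.

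The remaining work is routine: having established that $\rho$ is dilatation monotone, apply Theorem~\ref{main_thm2} verbatim. I expect the main obstacle to be the bookkeeping in step (a)/(c) — verifying carefully that the variables $X_{n,j}$ from Lemma~\ref{CGX:18prop} genuinely land in $\cX$ using only the closure properties (i) and (ii), and that the law-invariance of $\rho$ is being applied to variables that are actually equidistributed (both across $j$ and with $X$ itself). One subtlety worth double-checking is whether $\sup_n|Y_n|$ lies in $\mathcal{I}(\cX)$ rather than $\cX$: since $\mathcal{I}(\cX)$ only needs $\sup_n|Y_n| \le \lambda_1|X| + \lambda_2|\E[X|\pi]| + \lambda_3 \one$ with the three dominating variables in $\cX$, and $X \in \cX$, $\E[X|\pi]\in\mathcal{L}\subset L^\infty\subset\cX$, $\one\in\cX$, this is exactly the form of the bound in Lemma~\ref{CGX:18prop}, so the Fatou property applies as stated.
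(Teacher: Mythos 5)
Your proposal is correct and follows essentially the same route as the paper: verify dilatation monotonicity of $\rho$ via Lemma~\ref{CGX:18prop}, using (i) and (ii) to place each $X_{n,j}$ in $\cX$, law-invariance plus quasiconvexity to bound $\rho$ of the averages by $\rho(X)$, and the Fatou property to pass to $\E[X|\pi]$, then apply Theorem~\ref{main_thm2}. The extra bookkeeping you carry out (boundedness of $X_{n,j}\one_{\{|X|\le n\}}$ via equidistribution, membership of the dominating bound in $\mathcal{I}(\cX)$) is exactly what the paper leaves implicit, and it checks out.
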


\begin{proof}
We will verify that $\rho$ is dilatation monotone, then the result follows by Theorem ~\ref{main_thm2}. Let $X \in \mathcal{X}$  and $\pi \in \Pi$. Pick $X_{n,j}, j=1,...,N_n,n \in \mathbb{N}$ as in Lemma~\ref{CGX:18prop}. Then  $X_{n,j}\one_{\{|X|\leq n\}} \in L^\infty$, and thus $X_{n,j}=X_{n,j}\one_{\{|X|\leq n\}}+X_{n,j}\one_{\{|X|>n\}}=X_{n,j}\one_{\{|X|\leq n\}}+X\one_{\{|X|>n\}} \in \cX$. We also have that $X_{n,j}$ has the same distribution as $X$, therefore $\rho(X)=\rho(X_{n,j})$ for each $j,n$ and by the quasi-convexity of $\rho$ we get that  $\rho(\frac{1}{N_n}\sum_{j=1}^{N_{n}} X_{n,j})\leq \rho(X)$. Finally by ($\star$), it follows that $\rho(\mathbb{E}[X|\pi]) \leq \liminf_n \rho(\frac{1}{N_n}\sum_{j=1}^{N_{n}} X_{n,j})\leq \rho(X)$.
\end{proof}

Theorem~\ref{main_thm2} and Corollary ~\ref{lawcor} improve and extend some recent results in this topic (see e.g. (\cite{LT:19}, Theorem 3.1) and (\cite{LS:19}, Theorem 18, Proposition 20))  to domains $\cX$ that are not necessarily linear nor rearrangement invariant. This general framework besides being of mathematical interest is also motivated by the Musielak-Orlicz spaces which are relevant to utility theory (see e.g. \cite{FM:11}).

In the following, we will derive a dual representation for convex $\rho$ in terms of the following conjugate function.
\begin{equation}\label{eq2}\rho^{\#}(Y)=\sup_{X \in \mathcal{L}}\{\mathbb{E}[XY]-\rho(X)\},\qquad Y \in L^1.\end{equation}
We say that a subset $\cC$ of $L^1$ is  a \emph{dilatation monotone set} if it is non-empty and $\mathbb{E}[X|\pi] \in \cC$ for all $X \in \cC$ and $\pi \in \Pi$.

\begin{lemma}\label{density}
Let $\rho:\mathcal{X} \rightarrow [-\infty,\infty]$ be dilatation monotone. 
\begin{itemize}
\item[(i)]  $\rho^{\#}$ is dilatation monotone and $||\cdot||_1$ lower semicontinuous.
\item[(ii)] Let $\cC \subset \cX$ be a dilatation subset of $L^1$. If $\rho$ has the Fatou property, then for any  $Y \in L^1$ such that $\mathbb{E}[|YX|] <\infty$ for all $X \in \mathcal{C}$, we have that 
$$\sup_{X \in \mathcal{L} \cap \cC}\{\mathbb{E}[XY]-\rho(X)\}=\sup_{X \in \mathcal{C}}\{\mathbb{E}[XY]-\rho(X)\}.$$
\end{itemize}
\end{lemma}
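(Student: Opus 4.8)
For part (i), I would argue directly from the definition $\rho^{\#}(Y)=\sup_{X\in\mathcal L}\{\E[XY]-\rho(X)\}$. The functional $\rho^{\#}$ is a supremum of the affine maps $Y\mapsto \E[XY]-\rho(X)$ indexed by $X\in\mathcal L$; each such map is $\|\cdot\|_1$-continuous on $L^1$ (since $X\in\mathcal L\subset L^\infty$, $Y\mapsto\E[XY]$ is $\|\cdot\|_1$-continuous), so $\rho^{\#}$ is $\|\cdot\|_1$ lower semicontinuous as a supremum of continuous functions. For dilatation monotonicity, fix $Y\in L^1$ and $\pi\in\Pi$; I want $\rho^{\#}(\E[Y|\pi])\le\rho^{\#}(Y)$. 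For any $X\in\mathcal L$, using the self-adjointness of conditional expectation and $\E[X|\pi]\in\mathcal L$,
\[
\E[X\,\E[Y|\pi]]=\E[\E[X|\pi]\,Y]\le \rho^{\#}(Y)+\rho(\E[X|\pi])\le \rho^{\#}(Y)+\rho(X),
\]
where the last step uses dilatation monotonicity of $\rho$. Rearranging gives $\E[X\,\E[Y|\pi]]-\rho(X)\le\rho^{\#}(Y)$, and taking the supremum over $X\in\mathcal L$ yields the claim. This part is routine.

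For part (ii), the inequality $\sup_{X\in\mathcal L\cap\cC}\le\sup_{X\in\cC}$ is trivial since $\mathcal L\cap\cC\subset\cC$, so the work is the reverse inequality: given $X\in\cC$ with $\E[|YX|]<\infty$, I must produce elements of $\mathcal L\cap\cC$ along which $\E[\,\cdot\,Y]-\rho(\,\cdot\,)$ approaches (at least) $\E[XY]-\rho(X)$. The natural candidates are the finite conditional expectations $\E[X|\pi_n]$: they lie in $\mathcal L$ (being $\sigma(\pi_n)$-measurable and bounded when $X\in L^1$... careful here — $\E[X|\pi_n]$ is simple but not necessarily bounded, yet it is still simple, hence in $\mathcal L$), and they lie in $\cC$ because $\cC$ is a dilatation monotone set. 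So for any $\pi\in\Pi$, $\E[X|\pi]\in\mathcal L\cap\cC$, and therefore
\[
\sup_{X'\in\mathcal L\cap\cC}\{\E[X'Y]-\rho(X')\}\ \ge\ \sup_{\pi\in\Pi}\{\E[\E[X|\pi]\,Y]-\rho(\E[X|\pi])\}.
\]
Now I need two things: first, $\E[\E[X|\pi]\,Y]=\E[X\,\E[Y|\pi]]\to\E[XY]$ along a suitable sequence $(\pi_n)$; and second, $\rho(\E[X|\pi_n])\to\rho(X)$, which is exactly where Proposition~\ref{starproperty} (specifically \eqref{remark000}) enters — it requires $\E[X|\pi_n]\xrightarrow{\sigma(L^1,\mathcal L)}X$. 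Combining these, $\E[\E[X|\pi_n]\,Y]-\rho(\E[X|\pi_n])\to\E[XY]-\rho(X)$, giving $\sup_{X'\in\mathcal L\cap\cC}\{\E[X'Y]-\rho(X')\}\ge\E[XY]-\rho(X)$, and then a supremum over $X\in\cC$ finishes the proof.

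The main obstacle is choosing the partitions $(\pi_n)$ so that simultaneously $\E[X|\pi_n]\to X$ weakly (i.e. $\sigma(L^1,\mathcal L)$, so that \eqref{remark000} applies to give $\rho$-convergence) \emph{and} $\E[X\,\E[Y|\pi_n]]\to\E[XY]$ despite $Y$ only being in $L^1$ (not $L^\infty$). Here I would invoke Lemma~\ref{Mlemma1} applied to $X$: it furnishes $(\pi_n)$ with $\E[X|\pi_n]\xrightarrow{a.s.}X$ and $|\E[X|\pi_n]|\le|X|+k$; the a.s.\ convergence plus this $|X|+k\in L^1$ domination gives $\E[X|\pi_n]\xrightarrow{\|\cdot\|_1}X$ by dominated convergence, hence $\xrightarrow{\sigma(L^1,\mathcal L)}X$ (so \eqref{remark000} applies), and for the pairing $\E[X\,\E[Y|\pi_n]]=\E[\E[X|\pi_n]\,Y]\to\E[XY]$ I again use dominated convergence with the integrable dominating function $(|X|+k)|Y|$, which is integrable precisely by the hypothesis $\E[|YX|]<\infty$. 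I should double-check that refining the partitions from Lemma~\ref{Mlemma1} (or passing to the subsequence there) does not break the $|X|+k$ bound — it does not, since the bound is uniform in $n$ by construction — and that is the only delicate bookkeeping point.
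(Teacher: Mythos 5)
Your proposal is correct and follows essentially the same route as the paper: part (i) via the self-adjointness of conditional expectation plus dilatation monotonicity of $\rho$ and the ``supremum of $\|\cdot\|_1$-continuous affine maps'' argument, and part (ii) by feeding the partitions from Lemma~\ref{Mlemma1} into \eqref{remark000} and using dominated convergence with the dominating function $(|X|+k)|Y|$, whose integrability is exactly the hypothesis $\E[|XY|]<\infty$ together with $Y\in L^1$. (Minor aside: every element of $\mathcal L$ is automatically bounded, so your parenthetical worry about $\E[X|\pi_n]$ is moot; membership in $\mathcal L\cap\cC$ is all that is needed and you establish it.)
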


\begin{proof}
(i) Fix any $Y \in L^1$ and $\pi \in \Pi$. We have that
\begin{align*}
\rho^{\#}(\mathbb{E}[Y|\pi])=&\sup_{X \in \mathcal{L}}\{\mathbb{E}[X\mathbb{E}[Y|\pi]]-\rho(X)\}=\sup_{X \in \mathcal{L}}\{\mathbb{E}[Y\mathbb{E}[X|\pi]]-\rho(X)\}\\
    \leq &\sup_{X \in \mathcal{L}}\{\mathbb{E}[Y\mathbb{E}[X|\pi]]-\rho(\mathbb{E}[X|\pi])\}\leq\sup_{Z \in \mathcal{L}}\{\mathbb{E}[YZ]-\rho(Z)\}=\rho^{\#}(Y).
\end{align*}
Thus $\rho^{\#}$ is dilatation monotone on $L^1$. Next, let $(Y_n)_{n \in \mathbb{N}} \subset L^1$ and $Y \in L^1$ such that $Y_n \xrightarrow{{|| \cdot ||_1}} Y$. Then $\lim_n(\mathbb{E}[XY_n]-\rho(X))=\mathbb{E}[XY]-\rho(X)$ for all $X \in \mathcal{L}$. Thus since $\rho^{\#}(Y_n) \geq  \mathbb{E}[XY_n]-\rho(X)$ for all $ X \in \mathcal{L}$, it follows that $\liminf_n\rho^{\#}(Y_n) \geq  \mathbb{E}[XY]-\rho(X)$ for any $ X \in \mathcal{L}$, implying that $\liminf_n\rho^{\#}(Y_n) \geq \rho^{\#} (Y)$. This proves that $\rho^{\#}$ is $||\cdot||_1$ lower semicontinuous.

(ii) It is clear that $c:=\sup_{X \in \mathcal{L} \cap \cC}\{\mathbb{E}[XY]-\rho(X)\} \leq \sup_{X \in \mathcal{C}}\{\mathbb{E}[XY]-\rho(X)\}$. Let $Y \in L^1$ be such that $\mathbb{E}[|YX|] <\infty$ for all $X \in \mathcal{C}$. Now pick any $X \in \mathcal{C}$. By Lemma~\ref{Mlemma1}, there exists a sequence $(\pi_n)_{n \in \mathbb{N}} \subset \Pi$ such that $\mathbb{E}[X|\pi_n] \xrightarrow{a.s.} X$ and $X^*:=\sup_{n \in \mathbb{N}}\{|\mathbb{E}[X|\pi_n]|\} \in \mathcal{I}(\cC\cup\{\one\})$, so that $\E[X^*\abs{Y}]<\infty$. By the Dominated Convergence Theorem,   $\mathbb{E}[YX]=\lim_{n} \mathbb{E}[Y\mathbb{E}[X|\pi_n]]$. Thus since $ c\geq \mathbb{E}[\mathbb{E}[X|\pi_n]Y]-\rho(\mathbb{E}[X|\pi_n])$ for any $n\in\N$, by  ~(\ref{remark000}) we get that 
$$c\geq \lim_n \Big(\mathbb{E}[\mathbb{E}[X|\pi_n]Y]-\rho(\mathbb{E}[X|\pi_n])\Big)=\mathbb{E}[XY]-\rho(X).$$
Therefore, $c \geq \sup_{X \in \mathcal{C}}\{\mathbb{E}[XY]-\rho(X)\} $, yielding the desired inequality.
\end{proof}

\begin{theorem}\label{main_rep_thm}
Let $\rho:\cX \rightarrow (-\infty,\infty]$ be convex, dilatation monotone  with the Fatou property, then $\overline{\rho}$ admits the following dual representation 
$$\overline{\rho}(X)=\sup_{Y \in \mathcal{L}}\{\mathbb{E}[XY]-\rho^{\#}(Y)\}, \quad\text{ for all } X \in L^1 .$$
If, moreover, $\rho$ is not identically equal to $\infty$, is cash-additive and monotone, then the dual representation can be simplified as follows
$$\overline{\rho}(X)=\sup_{\frac{d\mathbb{Q}}{d\mathbb{P}} \in \mathcal{L}}\{\mathbb{E}_{\mathbb{Q}}[-X]-\rho^{\#}(-\frac{d\mathbb{Q}}{d\mathbb{P}})\},\quad\text{ for all } X \in L^1.$$
\end{theorem}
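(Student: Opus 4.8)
The plan is to establish the unrestricted dual representation first and then specialize. For the first formula, I would start from Theorem~\ref{main_thm2}, which tells us that $\overline{\rho}:L^1\to(-\infty,\infty]$ is convex (it preserves convexity of $\rho$) and $\sigma(L^1,\mathcal{L})$ lower semicontinuous, and that it restricts to $\rho$ on $\cX\supset\mathcal{L}$. Since $(L^1,\mathcal{L})$ is a dual pair under the bilinear form $(X,Y)\mapsto\E[XY]$ (here we use that $\mathcal{L}\subset L^\infty$ separates points of $L^1$ and that $L^1$ separates points of $\mathcal{L}$), the Fenchel--Moreau theorem applies: a proper convex $\sigma(L^1,\mathcal{L})$-lsc functional equals its biconjugate computed with respect to this pairing. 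Thus $\overline{\rho}(X)=\sup_{Y\in\mathcal{L}}\{\E[XY]-\overline{\rho}^{*}(Y)\}$, where $\overline{\rho}^{*}(Y)=\sup_{X\in L^1}\{\E[XY]-\overline{\rho}(X)\}$. The remaining point is to identify $\overline{\rho}^{*}$ with $\rho^{\#}$ as defined in (\ref{eq2}). Clearly $\overline{\rho}^{*}(Y)\ge\sup_{X\in\mathcal{L}}\{\E[XY]-\rho(X)\}=\rho^{\#}(Y)$ since $\overline{\rho}=\rho$ on $\mathcal{L}$. For the reverse inequality, given $X\in L^1$ use Lemma~\ref{Mlemma1} to pick $(\pi_n)\subset\Pi$ with $\E[X|\pi_n]\to X$ a.s.\ and $|\E[X|\pi_n]|\le|X|+k$; then $\E[X|\pi_n]\to X$ in $L^1$ by dominated convergence, hence $\E[X\E[X|\pi_n]\cdot Y]$... more precisely $\E[Y\,\E[X|\pi_n]]\to\E[YX]$, and by dilatation monotonicity $\overline{\rho}(\E[X|\pi_n])=\rho(\E[X|\pi_n])\ge\overline{\rho}(X)$ is the wrong direction --- instead I use that $\E[X|\pi_n]\in\mathcal{L}$, so $\E[Y\E[X|\pi_n]]-\overline{\rho}(\E[X|\pi_n])=\E[Y\E[X|\pi_n]]-\rho(\E[X|\pi_n])\le\rho^{\#}(Y)$, and also (\ref{remark00011}) gives $\rho(\E[X|\pi_n])\le\overline\rho(X)$ while $\overline\rho(X)=\lim_n\rho(\E[X|\pi_n])$ need not hold for $X\notin\cX$; but what I actually need is $\overline\rho(X)\ge\limsup_n\rho(\E[X|\pi_n])$, which holds since each $\E[X|\pi_n]$ is a conditional expectation of $X$ and $\overline\rho$ is dilatation monotone on $L^1$. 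Combining, $\E[XY]-\overline\rho(X)\le\lim_n\E[Y\E[X|\pi_n]]-\limsup_n\rho(\E[X|\pi_n])\le\limsup_n(\E[Y\E[X|\pi_n]]-\rho(\E[X|\pi_n]))\le\rho^{\#}(Y)$; taking the supremum over $X\in L^1$ gives $\overline\rho^{*}(Y)\le\rho^{\#}(Y)$. This identifies the conjugate and yields the first representation.

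For the second representation, assume in addition that $\rho$ is cash-additive, monotone, and not identically $\infty$; by Theorem~\ref{main_thm2} these pass to $\overline\rho$, so $\overline\rho:L^1\to(-\infty,\infty]$ is proper, convex, monotone (decreasing), cash-additive, and $\sigma(L^1,\mathcal{L})$-lsc. The strategy is the standard one for convex risk measures: show that the penalty function $\rho^{\#}$ is $+\infty$ unless $-Y$ is the density of a probability measure, i.e.\ unless $Y\le0$ and $\E[-Y]=1$. Monotonicity of $\overline\rho$ forces $\rho^{\#}(Y)=\overline\rho^{*}(Y)=+\infty$ whenever $Y$ is not a.s.\ $\le 0$: if $\bP(Y>0)>0$, plug in $X=-t\,\one_{\{Y>0\}}$ for $t\to\infty$ into $\E[XY]-\overline\rho(X)$ and use $\overline\rho(X)\le\overline\rho(0)<\infty$ (monotonicity, since $X\le 0$ and $0\le 0$... careful with the sign convention: $X\ge Y\Rightarrow\rho(X)\le\rho(Y)$, so $-t\one_{\{Y>0\}}\le 0$ gives $\overline\rho(-t\one_{\{Y>0\}})\ge\overline\rho(0)$, which is the wrong direction) --- here I should instead take $X=t\,\one_{\{Y>0\}}\ge 0$, so $\overline\rho(X)\le\overline\rho(0)$, and then $\E[XY]-\overline\rho(X)\ge t\,\E[Y\one_{\{Y>0\}}]-\overline\rho(0)\to\infty$. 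Cash-additivity of $\overline\rho$ forces $\E[-Y]=1$ on the effective domain of $\rho^{\#}$: for $s\in\R$, $\overline\rho^{*}(Y)\ge\E[(X+s)Y]-\overline\rho(X+s)=\E[XY]-\overline\rho(X)+s(\E[Y]+1)$, and letting $s\to\pm\infty$ shows $\overline\rho^{*}(Y)=+\infty$ unless $\E[Y]=-1$. Hence the supremum in the first representation may be restricted to those $Y\in\mathcal{L}$ with $-Y=\frac{d\bQ}{d\bP}$ for a probability measure $\bQ$; writing $\E[XY]=\E[-X\frac{d\bQ}{d\bP}]=\E_{\bQ}[-X]$ and relabeling $\rho^{\#}(Y)=\rho^{\#}(-\frac{d\bQ}{d\bP})$ gives exactly the stated formula.

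The main obstacle I anticipate is the careful justification that $\overline\rho^{*}$ (the convex conjugate of $\overline\rho$ over all of $L^1$) coincides with $\rho^{\#}$ (the conjugate over $\mathcal{L}$ only), i.e.\ that enlarging the domain from $\mathcal{L}$ to $L^1$ does not decrease the penalty. This is where one needs Lemma~\ref{Mlemma1} together with dilatation monotonicity of $\overline\rho$ on $L^1$ and the finiteness $\E[|XY|]<\infty$ automatic from $X\in L^1,Y\in\mathcal{L}\subset L^\infty$; it is essentially the content of Lemma~\ref{density}(ii) applied with $\cC=L^1$ (which is indeed a dilatation monotone subset of $L^1$ containing $\mathcal{L}$, and $\E[|YX|]<\infty$ for all $X\in L^1$ since $Y$ is bounded), once we note that $\overline\rho$ extends $\rho$ and is dilatation monotone with the Fatou property on $L^1$. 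A secondary technical point is to confirm that the pair $(L^1,\mathcal{L})$ genuinely separates points so that Fenchel--Moreau is applicable with this (non-standard, since $\mathcal{L}$ is dense in but not equal to $L^\infty$) pairing; this follows because $\mathcal{L}$ already contains all indicator functions $\one_A$, $A\in\mathcal{F}$, which suffice to separate $L^1$. Everything else is a routine application of the cash-additivity/monotonicity bookkeeping familiar from the convex risk measure literature.
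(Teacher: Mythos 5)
Your proposal is correct and follows essentially the same route as the paper: Fenchel--Moreau for the pairing $(L^1,\mathcal{L})$ applied to the $\sigma(L^1,\mathcal{L})$ lower semicontinuous convex extension from Theorem~\ref{main_thm2}, identification of $\overline{\rho}^{*}$ with $\rho^{\#}$ via Lemma~\ref{density}(ii) with $\cC=L^1$ (which you rederive inline), and the standard monotonicity/cash-additivity argument confining the penalty to probability densities. The only cosmetic difference is that the paper tests against a fixed $X_0\in\mathcal{L}$ with $\rho(X_0)\in\R$ rather than against $0$; your use of $\overline{\rho}(0)<\infty$ is also fine, since $\rho(0)<\infty$ follows from non-triviality together with dilatation monotonicity and cash-additivity.
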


\begin{proof}
By Theorem~\ref{main_thm2}, $\overline{\rho}:L^1 \rightarrow (-\infty,\infty]$ is convex, dilatation monotone and $\sigma(L^1,\mathcal{L})$ lower semicontinuous. Let $\overline{\rho}^*$ be the convex conjugate of $\overline{\rho}$, that is,   $\overline{\rho}^{*}(Y)=\sup_{X \in L^1} \{\mathbb{E}[XY]-\overline{\rho}(X)\}$,  $Y \in \mathcal{L}$. By Lemma~\ref{density}(ii) applied with $\cC=L^1$, it follows that \begin{equation}\label{eqdual0}\overline{\rho}^{*}(Y)=\sup_{X \in \mathcal{L}} \{\mathbb{E}[XY]-\rho(X)\}=\rho^{\#}(Y),\end{equation} for all $Y \in \mathcal{L}$. 
By standard convex duality results (see, e.g., (\cite{ET:99}, Proposition 4.1)) and ~(\ref{eqdual0}), we get that 
\begin{equation}\label{eqdual}\overline{\rho}(X)=\sup_{Y \in \mathcal{L}} \{\mathbb{E}[XY]-\rho^{\#}(Y)\}.\end{equation}
In the case where $\rho$ is not identically equal to $\infty$, is cash-additive and monotone it is standard to check that  $\rho^{\#}(-Y)=\infty$ for all $Y \in \mathcal{L}\setminus \mathcal{D}$, where $\mathcal{D}=\{Y \in \mathcal{L}_+\,\, : \,\, \mathbb{E}[Y]=1\}$. Indeed, fix $X_0 \in \mathcal{L}$ such that $\rho(X_0) \in \mathbb{R}$ and let $Y \in \mathcal{L}$ such that $\mathbb{E}[Y] \neq 1$, then we have that $\rho^{\#}(-Y)\geq \mathbb{E}[-Y(X_0+m)]-\rho(X_0+m)=m(1-\mathbb{E}[Y])+\mathbb{E}[-YX_0]-\rho(X_0)$ for all $m \in \mathbb{R}$, thus $\rho^{\#}(-Y)=\infty$. Suppose that $Y \not\geq 0$, then $\mathbb{E}[-Y\one_{\{Y<0\}}]>0$ and we have that $\rho^{\#}(-Y)\geq \mathbb{E}[-Y(X_0+m\one_{\{Y<0\}})]-\rho(X_0+m\one_{\{Y<0\}})\geq m \mathbb{E}[-Y\one_{\{Y<0\}}]+\mathbb{E}[-YX_0]-\rho(X_0)$ for all $m \in \mathbb{R}_+$, thus $\rho^{\#}(-Y)=\infty$. Therefore the representation ~(\ref{eqdual}) takes the form: $\overline{\rho}(X)=\sup_{Y \in \mathcal{L}} \{\mathbb{E}[-XY]-\rho^{\#}(-Y)\}=\sup_{Y \in \mathcal{D}} \{\mathbb{E}[-XY]-\rho^{\#}(-Y)\}=\sup_{\frac{d\mathbb{Q}}{d\mathbb{P}} \in \mathcal{L}} \{\mathbb{E}_{\mathbb{Q}}[-X]-\overline{\rho}^{\#}(-\frac{d\mathbb{Q}}{d\mathbb{P}})\}$.


\end{proof}


We will now proceed to analyzing cash-additive hulls of dilatation monotone functionals. For $f:L^1 \rightarrow (-\infty,\infty]$, recall that  the cash-additive hull $\rho^f$ of $f$ introduced in \cite{FK:07} is defined as follows:
\begin{equation}\label{cash0}
\rho^f(X)=\inf_{s \in \mathbb{R}}\{f(X-s)-s\},\quad X \in L^1.
 \end{equation}
From the definition of $\rho^f$, it is clear that $\rho^f$ is cash-additive and is easy to see that $\rho^f$ preserves convexity, monotonicity and dilatation monotonicity of $f$. In the following result, we show that under a coercive condition, the infimum in ~(\ref{cash0}) is attained, and moreover, $\rho^f$ preserves continuity properties of $f$. 





\begin{theorem}\label{cash-thm}
Let $f:L^1 \rightarrow (-\infty,\infty]$ be dilatation monotone and $||\cdot||_{1}$ lower semicontinuous that satisfies the following coercive condition.
\begin{equation}\label{coercive}\displaystyle\lim_{|s| \rightarrow \infty} f(s)+s=\infty.\end{equation}
Then $\rho^f:L^1 \rightarrow (-\infty,\infty]$ is $|| \cdot||_1$ lower semicontinuous and 
$$
\rho^f(X)=\min_{s \in \mathbb{R}}\{f(X-s)-s\},\quad\text{ for all } X \in L^1.
$$
\end{theorem}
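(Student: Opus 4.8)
The plan is to establish two things: first, that for every $X \in L^1$ the infimum defining $\rho^f(X)$ is attained by some $s \in \R$ (so that we may write $\min$ instead of $\inf$), and second, that $\rho^f$ is $\|\cdot\|_1$ lower semicontinuous. I would treat the attainment of the infimum as the key preliminary step, since the lower semicontinuity argument will reuse the compactness estimate obtained there.

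For the attainment: fix $X \in L^1$ and consider the function $g_X(s) = f(X-s) - s$ on $\R$. The coercive condition $\lim_{|s|\to\infty} f(s)+s = \infty$ should be transferred from constants to the translates $X - s$. To do this I would use dilatation monotonicity: pick $\pi \in \Pi$ and note $\E[X - s \mid \pi] = \E[X\mid\pi] - s$, but more to the point, I want a lower bound of the form $f(X-s) \geq f(\E[X-s\mid\pi])$ and then push $\pi$ down toward the trivial partition; alternatively, the cleaner route is to use dilatation monotonicity with the trivial partition $\pi_0 = \{\Omega\}$ directly, giving $f(X-s) \geq f(\E[X] - s) = f(\E[X-s\mid\pi_0])$, a constant. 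Hence $g_X(s) \geq f(\E[X]-s) - s$, and the right-hand side tends to $+\infty$ as $|s|\to\infty$ by the coercive condition (after the substitution $t = \E[X]-s$, noting $-s = t - \E[X]$). Therefore $g_X$ is coercive on $\R$. Combined with lower semicontinuity of $g_X$ — which follows because $s \mapsto X - s$ is $\|\cdot\|_1$-continuous and $f$ is $\|\cdot\|_1$ lower semicontinuous, so $g_X$ is lsc as a function of the real variable $s$ — a coercive lsc function on $\R$ attains its infimum on a compact sublevel set. This gives $\rho^f(X) = \min_{s\in\R}\{f(X-s)-s\}$, and in particular $\rho^f(X) > -\infty$.

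For the lower semicontinuity of $\rho^f$: let $X_n \xrightarrow{\|\cdot\|_1} X$ and set $\lambda = \liminf_n \rho^f(X_n)$; we may assume $\lambda < \infty$ and, passing to a subsequence, that $\rho^f(X_n) \to \lambda$ with all $\rho^f(X_n)$ bounded above by some $c$. By the first part, choose $s_n$ with $\rho^f(X_n) = f(X_n - s_n) - s_n$. The key point is a uniform bound on $(s_n)$: since $\|X_n\|_1$ is bounded, $(\E[X_n])$ is bounded, and from $f(X_n - s_n) - s_n \geq f(\E[X_n] - s_n) - s_n$ together with the coercive condition one extracts that $(s_n)$ lies in a bounded set (if $|s_n| \to \infty$ along a further subsequence, the right-hand side would blow up, contradicting the bound $c$). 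So pass to a subsequence with $s_n \to s \in \R$. Then $X_n - s_n \xrightarrow{\|\cdot\|_1} X - s$, so by $\|\cdot\|_1$ lower semicontinuity of $f$, $f(X-s) \leq \liminf_n f(X_n - s_n)$, whence $\rho^f(X) \leq f(X-s) - s \leq \liminf_n\{f(X_n-s_n) - s_n\} = \lambda$. This shows $\rho^f$ is $\|\cdot\|_1$ lower semicontinuous.

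The main obstacle I anticipate is the uniform boundedness of the minimizers $(s_n)$ in the lower semicontinuity argument: one must convert the pointwise coercivity $\lim_{|s|\to\infty} f(s)+s = \infty$ into a statement that is uniform over the relevant family of translated arguments $X_n - s_n$, and the natural bridge is again dilatation monotonicity applied with the trivial partition to reduce to constants, so care is needed that this reduction is valid ($\one \in \mathcal{L}$, so constants are in the domain, and $f$ is defined on all of $L^1$ here). A secondary point worth stating explicitly is that $\rho^f > -\infty$, which drops out of the attainment argument since the minimum of a real-valued coercive lsc function is finite whenever $f$ itself is somewhere finite — but one should note that if $f \equiv \infty$ the statement is trivial, and otherwise dilatation monotonicity from a point where $f$ is finite keeps $f(\E[X]-s)$ finite for the relevant $s$.
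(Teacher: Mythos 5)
Your proof is correct and follows essentially the same route as the paper: dilatation monotonicity with respect to the trivial partition $\{\Omega\}$ reduces the coercivity of $s \mapsto f(X-s)-s$ to the scalar coercive condition on $f(s)+s$, which yields both the attainment of the infimum and the uniform boundedness of the minimizers $s_n$ in the lower semicontinuity argument. The only cosmetic difference is that you package the attainment step as a Weierstrass-type statement for coercive lower semicontinuous functions on $\mathbb{R}$, whereas the paper runs the minimizing-sequence argument explicitly; the content is identical.
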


\begin{proof} Pick any $X \in L^1$. If $\rho^f(X)=\infty$, then  $f(X-s)-s=\infty$ for all $s \in \mathbb{R}$, and thus the infimum in (\ref{cash0}) is attained at every $s\in\R$. 
Now assume that $\rho^f(X)<\infty$.
Let $(s_n)_{n \in \mathbb{N}} \subset \mathbb{R}$ such that $f(X-s_n)-s_n \rightarrow \rho^f(X)$. Suppose that $(s_n)_{n \in \mathbb{N}}$ is unbounded. By passing to a subsequence, we may assume that $|s_n| \rightarrow \infty$.	Since $f$ is dilatation monotone, we have $$f(X-s_n)-s_n \geq f(\mathbb{E}[X]-s_n)+(\mathbb{E}[X]-s_n)-\mathbb{E}[X],$$
implying that
$$\rho^f(X) \geq  \limsup_n\Big(f(\mathbb{E}[X]-s_n)+(\mathbb{E}[X]-s_n)\Big)-\mathbb{E}[X].$$ 
Since $|\mathbb{E}[X]-s_n|\geq \abs{s_n}-|\mathbb{E}[X]| \rightarrow \infty$,   (\ref{coercive}) implies that $\rho^f(X)=\infty$, which is a contradiction. Thus $(s_n)_{n \in \mathbb{N}}$ is bounded, and by passing to a subsequence, we may assume that $s_n \rightarrow s \in \mathbb{R}$. Therefore, $X-s_n \xrightarrow{||\cdot||_1} X-s$. Since $f$ is $||\cdot||_1$ lower semicontinuous, we get that $f(X-s)-s\leq \liminf_n (f(X-s_n)-s_n)=\rho^f(X)$. In particular, it follows that $\rho^f(X)=f(X-s)-s$, i.e., the infimum in (\ref{cash0}) is attained, and consequently, $\rho^f(X)\neq -\infty$. 


It remains to be shown that $\rho^f$ is $||\cdot||_1$ lower semicontinuous. Fix some $\lambda \in \mathbb{R}$ and let $(X_n)_{n \in \mathbb{N}}\subset L^1,X\in L^1$ be such that $X_n\xrightarrow{||\cdot||_1} X$ and $(X_n)_{n \in \mathbb{N}} \subset \{\rho \leq \lambda\}$. Let $(s_n)_{n \in \mathbb{N}} \subset \mathbb{R}$ be such that $\rho^f(X_n)=f(X_n-s_n)-s_n$. Suppose that $(s_{n})_{n \in \mathbb{N}}$ is unbounded. Then we can find a subsequence $(s_{n_k})_{k \in \mathbb{N}}$ such that $|s_{n_k}| \rightarrow \infty$. Put $t_k=\mathbb{E}[X_{n_k}]-s_{n_k}$ and note that $\mathbb{E}[X_{n_k}] \rightarrow \mathbb{E}[X] \in \mathbb{R}$ and $|t_k| \geq |s_{n_k}|-|\mathbb{E}[X_{n_k}]|\rightarrow \infty$. Since $f$ is dilatation monotone, we have that $\rho^f(X_{n_k})=f(X_{n_k}-s_{n_k})-s_{n_k}\geq f(t_k )+t_k-\mathbb{E}[X_{n_k}]$. By ~(\ref{coercive}),   $f(t_k)+t_k \rightarrow \infty$, and thus  $\lim_k \rho^f(X_{n_k})=\infty$, which is a contradiction. Thus, $(s_n)_{n \in \mathbb{N}}$ is bounded, and we can extract a subsequence $(s_{n_k})_{k \in \mathbb{N}}$ such that $s_{n_k} \rightarrow s \in \mathbb{R}$. Then  $X_{n_k}-s_{n_k} \xrightarrow{||\cdot||_1} X-s$. Since $f$ is $||\cdot||_1$ lower semicontinuous, we have that $f(X-s)\leq \liminf_k f(X_{n_k}-s_{n_k})$ and that
\begin{align*}
    \rho^f(X) \leq &f(X-s)-s \leq \liminf_k f(X_{n_k}-s_{n_k})+\lim_k(-s_{n_k})\\
    =&  \liminf_k \Big(f(X_{n_k}-s_{n_k})-s_{n_k}\Big)= \liminf_k \rho^f(X_{n_k}) \leq \lambda. 
\end{align*}
Therefore, $X \in \{\rho \leq \lambda\}$ and $\rho$ is $|| \cdot ||_1$ lower semicontinuous.

\end{proof}

\section{Transformed norm risk measures}
In this section, we will present our application to the study of transformed norm risk measures. We will follow the terminology and notations of \cite{CL:08,CL:09}. We call a function $G:[0,\infty) \rightarrow[0,\infty]$ an \emph{Orlicz function} if it left-continuous, convex, $\lim_{x \rightarrow 0^+}G(x)=G(0)=0$ and $\lim_{x \rightarrow \infty} G(x)=\infty$. Under these assumptions $G$ is increasing (i.e., $x \leq y \implies G(x) \leq G(y)$). The convex conjugate $G^*$ of $G$  is again an Orlicz function. The \emph{Orlicz space} corresponding to $G$ is given by $L^G=\{X \in L^1 \,\, : \,\, \mathbb{E}[G(c|X|)] <\infty \,\, \text{ for some } c>0\}$ and the \emph{Orlicz heart} corresponding to $G$ is given by $M^G=\{X \in L^G \,\, : \,\, \mathbb{E}[G(c|X|)] <\infty \,\, \text{ for any } c>0\} $. The \emph{Luxemburg norm} is given by the following formula $$||X||_{G}=\inf\left\{\lambda>0:\mathbb{E}\left[G\left(\frac{|X|}{\lambda}\right)\right]\leq 1\right\}.$$ 
Recall  that the Orlicz space has the equivalent description $L^G=\{X \in L^1 \,\, : \,\, ||X||_{G}<\infty\}$. Also, note that $G^{-1}(1):=\sup\{t > 0 \,\,: \,\,  G(t)\leq 1\}=\frac{1}{||\one||_G}$. Finally, the \emph{Orlicz norm} is given by the formula $||X||^*_{G}=\sup\{\mathbb{E}[XY] \,\, : \,\, ||Y||_{G} \leq 1\}$.


Let $F$ be a left-continuous, increasing, convex function from $[0,\infty)$ to $(-\infty,\infty]$, not identical equal to $\infty$, such that $\lim_{x \rightarrow \infty} F(x)=\infty$, $G$ a real-valued Orlicz function, and $H: \mathbb{R} \rightarrow [0,\infty)$ an increasing, convex function with $\lim_{x \rightarrow \infty}H(x)=\infty$. We denote with $F^*$ and $H^*$ the convex conjugate of $F$ and $H$ respectively. We will also make use of the following conditions introduced in \cite{CL:08,CL:09}. 
$$(FGH1)  \qquad\quad F(\frac{H(s)+\epsilon}{G^{-1}(1)})<\infty \text{ for some } s \in \mathbb{R} \text{ and } \epsilon>0.$$
$$(FGH2) \qquad\qquad \lim_{s \rightarrow \infty}\Big(F(H(s))-G^{-1}(1)s\Big)=\infty. $$
The \emph{transformed norm risk measure} $T$ is defined as follows
\begin{equation}\label{cash1}
T(X)=\inf_{s \in \mathbb{R}}\{F(||H(s-X)||_G)-s\}. 
 \end{equation}
The risk measure $T$ is well-defined on the following subset of $L^1$:
$$\mathcal{X}=\{X \in L^1 \,\, : \,\, H(s-X) \in L^G  \,\, \text{ for all } s \in \mathbb{R}\}.$$

It is easy to see that $\mathcal{X}$ is convex and $L^\infty \subset \cX$. Moreover, $T:\cX \rightarrow [-\infty,\infty]$ is  monotone, convex and cash-additive. The following example illustrates that $\mathcal{X}$ is not  a vector space in general.

\begin{example}\label{hgexmp}
For $H(x)=x^+$,  $\mathcal{X}=L^G+L^1_+$.
Indeed, let $X \in L^G, Y\in L^1_+$ and $s \in \mathbb{R}$. Then $(s -X)^+ \in L^G$, and since $(s-X-Y)^+ \leq (s  -X)^+ $,   $(s-X-Y)^+ \in L^G$, proving that $L^G+L^1_+ \subset \mathcal{X}$. Now let $X \in \mathcal{X}$. Then $(s-X)^+ \in L^G$, and $X=s-(s-X)^++(s-X)^- \in L^G+L^1_+$. 
\end{example}

In \cite{CL:08,CL:09}, an extensive analysis of the properties of $T$ has been carried out under the assumption that $T$ is restricted on $M^\Phi$, where $\Phi:=G \circ H_0$ and $H_0(x)=H(x)-H(0)$ for $x \geq 0$. We aim here to investigate the properties of $T$ without imposing any restrictions on the domain. To apply the results of the previous section, we will follow \cite{CL:08,CL:09} and view $T$ as a cash-additive hull $\rho^f$ of an appropriate functional $f$.  We define  $\overline{F}:[0,\infty] \rightarrow (-\infty,\infty]$ as follows: \begin{equation}\overline{F}(x)=\begin{cases} F(x), & x \in [0,\infty) \\ \infty, & x=\infty\end{cases}\end{equation} 
and we put
\begin{equation}\label{function_norm}f(X)=\overline{F}(||H(-X)||_G), \quad  X \in L^1.\end{equation}


\begin{lemma}\label{lemmaFGH}
Let $f$ as in ~(\ref{function_norm}). Then
\begin{itemize}
\item[(i)] $f$ is convex, monotone, dilatation monotone, $||\cdot||_1$ lower semicontinuous on $L^1$. 
\item[(ii)] Under (FGH2), we have that 
$\displaystyle\lim_{|s| \rightarrow \infty} f(s )+s=\infty$.
\end{itemize}
\end{lemma}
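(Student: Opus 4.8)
The plan is to verify each listed property of $f(X)=\overline{F}(\norm{H(-X)}_G)$ by peeling off the composition layer by layer, using that $\overline{F}$ is increasing and convex, that $t \mapsto H(t)$ is increasing and convex on $\R$, and that $Z \mapsto \norm{Z}_G$ is a monotone (in $|Z|$), convex, $\norm{\cdot}_1$-continuous seminorm on $L^G$ (extended by the convention $\norm{Z}_G=\infty$ when $Z \notin L^G$).

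For (i), I would argue in the following order. \emph{Monotonicity}: if $X \geq Y$ then $-X \leq -Y$, so $H(-X) \leq H(-Y)$ pointwise since $H$ is increasing; hence $\norm{H(-X)}_G \leq \norm{H(-Y)}_G$ by monotonicity of the Luxemburg norm, and finally $f(X) \leq f(Y)$ since $\overline{F}$ is increasing. (This uses the convention that $\norm{\cdot}_G$ is increasing on all of $L^1_+$ with values in $[0,\infty]$, which follows directly from its definition.) \emph{Convexity}: fix $X_1,X_2 \in L^1$ and $\lambda \in (0,1)$, write $X_\lambda = \lambda X_1 + (1-\lambda)X_2$. Convexity of $H$ gives $H(-X_\lambda) \leq \lambda H(-X_1) + (1-\lambda)H(-X_2)$ pointwise; applying the increasing, convex seminorm and then the increasing convex function $\overline{F}$ yields
\[
f(X_\lambda) \leq \overline{F}\bigl(\lambda\norm{H(-X_1)}_G + (1-\lambda)\norm{H(-X_2)}_G\bigr) \leq \lambda f(X_1) + (1-\lambda)f(X_2),
\]
where the first inequality also uses monotonicity of $\overline{F}$ together with the triangle inequality and positive homogeneity of $\norm{\cdot}_G$, and the convention handles the case where one of the norms is $+\infty$. \emph{Dilatation monotonicity}: for $X \in L^1$ and $\pi \in \Pi$, conditional Jensen gives $H(-\E[X|\pi]) = H(\E[-X|\pi]) \leq \E[H(-X)|\pi]$, so $\norm{H(-\E[X|\pi])}_G \leq \norm{\E[H(-X)|\pi]}_G$; now use that $\norm{\E[Z|\pi]}_G \leq \norm{Z}_G$ for $Z \geq 0$ — this is the contraction property of conditional expectation on Orlicz spaces, which I would either cite or deduce quickly from the averaging form $\E[Z|\pi] = \sum_i \frac{\E[Z\one_{A_i}]}{\bP(A_i)}\one_{A_i}$ and Jensen applied to $G$ — and conclude $f(\E[X|\pi]) \leq f(X)$ by monotonicity of $\overline{F}$. \emph{Lower semicontinuity}: suppose $X_n \xrightarrow{\norm{\cdot}_1} X$; passing to a subsequence along which $\liminf$ is attained and a further a.s.-convergent subsequence, $H(-X_n) \to H(-X)$ a.s. by continuity of $H$ (note $H$ is finite-valued and convex, hence continuous on $\R$), and then I invoke the Fatou property of the Luxemburg norm (lower semicontinuity of $\norm{\cdot}_G$ under a.s. convergence) to get $\norm{H(-X)}_G \leq \liminf \norm{H(-X_n)}_G$; finally lower semicontinuity and monotonicity of $\overline{F}$ on $[0,\infty]$ (left-continuity plus the value $\infty$ at $\infty$ makes $\overline{F}$ l.s.c.) give $f(X) \leq \liminf_n f(X_n)$.

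For (ii), specialize to constants: for $s \in \R$, $f(s) = \overline{F}(\norm{H(-s)}_G) = \overline{F}(H(-s)\norm{\one}_G) = \overline{F}\bigl(H(-s)/G^{-1}(1)\bigr)$ using $\norm{\one}_G = 1/G^{-1}(1)$. Thus $f(s)+s = \overline{F}(H(-s)/G^{-1}(1)) + s$. As $s \to -\infty$, put $u = -s \to +\infty$; then $f(s)+s = \overline{F}(H(u)/G^{-1}(1)) - u$. Here the relation to (FGH2) is not literally the stated expression — (FGH2) reads $F(H(s)) - G^{-1}(1)s \to \infty$ — so the step that needs care is reconciling the placement of $G^{-1}(1)$; I would use convexity/superadditivity properties of $F$ (or the substitution built into the $(FGH2)$ normalization as used in \cite{CL:08,CL:09}) to pass between $F(H(u)/G^{-1}(1)) - u$ and $F(H(u)) - G^{-1}(1)u$, so that (FGH2) forces $f(s)+s \to \infty$ as $s \to -\infty$. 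As $s \to +\infty$, $u = -s \to -\infty$, so $H(-s) = H(u) \to H(-\infty) \geq 0$ stays bounded (in fact $H$ is bounded below on $(-\infty,0]$ by $0$ and is nonincreasing-to-a-limit there or simply bounded on any left ray by monotonicity), hence $\overline{F}(H(-s)/G^{-1}(1))$ is bounded while $s \to +\infty$, so $f(s)+s \to \infty$ trivially. Combining the two one-sided limits gives $\lim_{|s|\to\infty} f(s)+s = \infty$.

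The main obstacle I anticipate is purely bookkeeping rather than conceptual: in (ii), matching the constant $G^{-1}(1)$ between the genuine identity $f(s) = \overline{F}(H(-s)/G^{-1}(1))$ and the normalization convention under which $(FGH2)$ is stated in \cite{CL:08}. If $G^{-1}(1) \leq 1$ the passage is an easy inequality from monotonicity of $F$; if $G^{-1}(1) > 1$ one needs the convexity of $F$ together with $F(0) \geq -\infty$ finiteness to control $F(x) - F(x/c)$ for $c > 1$, or — cleaner — to simply adopt $(FGH2)$ in the equivalent form already calibrated to the Luxemburg norm as in the cited papers. In (i), the only mild subtlety is justifying the contraction $\norm{\E[Z|\pi]}_G \leq \norm{Z}_G$ and the a.s.-Fatou property of $\norm{\cdot}_G$; both are standard facts about Orlicz spaces that I would cite (e.g. from the monograph references already in the bibliography) rather than reprove.
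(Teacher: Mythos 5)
Part (i) of your proposal is sound and close to the paper's argument: monotonicity and convexity peel off exactly as you describe, and dilatation monotonicity is the same conditional-Jensen-plus-contraction argument (the paper cites the contraction $\norm{\E[Z|\pi]}_G\le\norm{Z}_G$ from Edgar--Sucheston). For lower semicontinuity you invoke the a.s.\ Fatou property of the Luxemburg norm and the fact that an increasing, left-continuous $\overline F$ is l.s.c.\ on $[0,\infty]$; the paper instead passes to the decreasing envelopes $Y_n=\sup_{k\ge n}X_k\downarrow X$, uses monotonicity of $f$ and $\norm{H(-Y_n)}_G\uparrow\norm{H(-X)}_G$, then left-continuity of $\overline F$. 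Both routes work; yours just needs the routine extra line of extracting a subsequence realizing $\liminf_n\norm{H(-X_n)}_G$ before applying $\overline F$.

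Part (ii), however, has a genuine gap at exactly the step you flagged. You must deduce $F\bigl(H(u)/G^{-1}(1)\bigr)-u\to\infty$ from (FGH2), i.e.\ from $F(H(u))-G^{-1}(1)u\to\infty$, and neither proposed fix achieves this. Write $c=G^{-1}(1)$. For $c\le1$, monotonicity gives $F(H(u)/c)\ge F(H(u))$, but you then subtract $u\ge cu$, so the chain ends at $F(H(u))-u$, which sits \emph{below} the quantity $F(H(u))-cu$ controlled by (FGH2); mere monotonicity does not suffice (one would need the convex scaling $F(\lambda x)\ge\lambda F(x)-(\lambda-1)F(0)$ with $\lambda=1/c\ge1$). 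For $c>1$ the situation is worse: convexity yields only the \emph{upper} bound $F(x/c)\le\tfrac1cF(x)+(1-\tfrac1c)F(0)$, and there is no pointwise lower bound on $F(x/c)$ in terms of $F(x)$, so this case is not closed by your sketch --- and $G^{-1}(1)>1$ genuinely occurs (e.g.\ $G(x)=(x/2)^p$). The paper resolves the issue with an asymptotic slope computation: by convexity $\lim_v F(v)/v=a$ and $\lim_u H(u)/u=b$ exist in $(0,\infty]$, whence $\lim_u F(H(u)/c)\,c/u=ab$ for \emph{every} $c>0$; (FGH2) forces $ab>G^{-1}(1)$ (otherwise $F(H(u))-G^{-1}(1)u\le F(H(0))$ for all $u\ge0$), and then $F\bigl(H(u)/G^{-1}(1)\bigr)-u=\tfrac{u}{G^{-1}(1)}\bigl(\tfrac{F(H(u)/G^{-1}(1))\,G^{-1}(1)}{u}-G^{-1}(1)\bigr)\to\infty$. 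This slope argument is the missing ingredient; the two normalizations of (FGH2) cannot be reconciled by a pointwise inequality. (Your treatment of $s\to+\infty$ is fine in substance: all that is needed there is the lower bound $f(s)\ge F(0)>-\infty$.)
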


\begin{proof}
(i) Convexity and monotonicity of $f$ is clear. 
We assert that $f$ is dilatation monotone. 
Let $X \in L^1$ and $\pi \in \Pi$. If $H(-X) \not\in L^G$, then $f(X)=\infty$ and $f(\mathbb{E}[X|\pi])\leq f(X)$. Suppose that $H(-X) \in L^G$. Then by the conditional Jensen's inequality and (\cite{ES:92}, Corollary 2.3.11),  
$$0 \leq H(\mathbb{E}[-X|\pi]) \leq \mathbb{E}[H(-X)|\pi],$$
$$||H(\mathbb{E}[-X|\pi])||_G \leq ||\mathbb{E}[H(-X)|\pi]||_G \leq ||H(-X)||_G.$$
Since $\overline{F}$ is increasing, we conclude that $ f(\mathbb{E}[X|\pi])  \leq f(X)$ as desired.

Next, we show that $f$ is $||\cdot||_1$ lower semicontinuous. Let $\lambda \in \mathbb{R}$ and $(X_n)_{n \in \mathbb{N}} \subset \{f\leq \lambda\}$ be such that  $X_n \xrightarrow{{||\cdot||_1}} X \in L^1$. By passing to a subsequence, we may assume that $X_n \xrightarrow{{a.s.}} X$ and $\sup_{n \in \mathbb{N}}\{|X_n|\}  \in L^1$. Put $Y_n=\sup_{k \geq n}X_k \in L^1$ and note that $Y_n \downarrow X$. Since $H$ is continuous and increasing, $H(-Y_n) \uparrow  H(-X)$. It follows that $||H(-Y_n)||_G \uparrow ||H(-X)||_G$; indeed, if $\sup_{n \in \mathbb{N}}\{||H(-Y_n)||_G\}=\infty$, then the assertion is obvious, otherwise it follows from (\cite{Z:83}, Theorem 131.6). Since $\overline{F}$ is left continuous,   $\lim_n\overline{F}(||H(-Y_n)||_G)=\overline{F}(||H(-X)||_G)$. Now since $f(X_n) \geq f(Y_n)$ for each $n \in \mathbb{N}$, $f(X) =\lim_n f(Y_n) \leq \liminf_n f(X_n) \leq \lambda$, implying that $X \in  \{\rho\leq \lambda\}$. This proves that $f$ is $||\cdot||_1$ lower semicontinuous.



(ii)
We have that $f(s)+s=F(||H(-s)||_G)+s\geq F(0)+s$. Thus clearly, $\lim_{s \rightarrow \infty}f(s)+s=\infty$. We also have that 
$F(||H(-s)||_G)+s=F(||\one||_G H(-s))+s=F(\frac{H(-s)}{G^{-1}(1)})+s $, and consequently,
$$\lim_{s \rightarrow -\infty}f(s )+s=\lim_{s \rightarrow \infty}F(\frac{H(s)}{G^{-1}(1)})-s.$$
Observe that for a convex function $V$ defined on $[0,\infty)$, $\frac{V(s)-V(0)}{s}$ is increasing in $s$ and thus if $V(0) \in \mathbb{R}$ we get $$\lim_{s \rightarrow \infty}\frac{V(s)}{s}=\sup\Big\{\frac{V(s)-V(0)}{s} \,\, : \,\, s \in [0,\infty)\Big\}.
$$
Applying this to $F$ and $H$ and using their monotonicity, we have 
$$\lim_{s \rightarrow \infty}\frac{F(s)}{s}=a, \displaystyle\lim_{s \rightarrow \infty}\frac{H(s)}{s}=b \,\, \text{ for } a,b \in (0,\infty].$$
From this, it follows that for any $c\in(0,\infty)$, $$\lim_{s \rightarrow \infty}\frac{F(\frac{H(s)}{c})c}{s}= \lim_{s \rightarrow \infty}\frac{F(\frac{H(s)}{c})}{\frac{H(s)}{c}} \lim_{s \rightarrow \infty}\frac{\frac{H(s)}{c}}{\frac{s}{c}}=ab.$$
In particular,  $\lim_{s \rightarrow \infty}\frac{F(H(s))}{s}=ab$. We claim that $ab>G^{-1}(1)$. If $F$ takes $\infty$ value, then $a=\infty$ and thus $ab>G^{-1}(1)$. Suppose that $F$ is real-valued and $ab\leq G^{-1}(1)$. Then since $F\circ H$ is convex, we get that $\lim_{s \rightarrow \infty}\frac{F(H(s))}{s}=\sup\{\frac{F(H(s))-F(H(0))}{s} \,\, : \,\, s \geq 0\} \leq G^{-1}(1)$. Hence, $F(H(s))-G^{-1}(1)s \leq F(H(0))$ for each $s \geq 0$, which contradicts (FGH2). This proves the claim. Finally,  we have that 
\begin{align*}
    \lim_{s \rightarrow \infty}F\big(\frac{H(s)}{G^{-1}(1)}\big)-s=&\lim_{s \rightarrow \infty}\frac{s}{G^{-1}(1)}\Big(\frac{F(\frac{H(s)}{G^{-1}(1)})G^{-1}(1)}{s}-G^{-1}(1)\Big)\\
    =&\lim_{s \rightarrow \infty}\frac{s}{G^{-1}(1)}(ab-G^{-1}(1))=\infty.
\end{align*}
This completes the proof.
\end{proof}

Let $\overline{T}$ be the extension of $T$ on $L^1$ given by formula ~(\ref{eq1}). In the following results we show that $\overline{T}$ satisfies the axioms of convex risk measures and admits tractable representations.

\begin{theorem}\label{tnormcor}
Suppose that (FGH2) holds. Then $\overline{T}$ is $||\cdot||_1$ lower semicontinuous, convex, monotone, cash-additive, dilatation monotone and is  given by the following formula
\begin{equation}\label{tnorm}\overline{T}(X)=\inf_{s \in \mathbb{R}}\{\overline{F}(||H(s-X)||_G)-s\}, \quad X \in L^1.\end{equation}
\end{theorem}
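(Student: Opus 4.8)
The plan is to recognize $\overline{T}$ as the extension (in the sense of \eqref{eq1}) of the cash-additive hull $\rho^f$, where $f$ is the functional defined in \eqref{function_norm}, and then to chain together Lemma~\ref{lemmaFGH}, Theorem~\ref{cash-thm}, and Theorem~\ref{main_thm2}. First I would observe that, by definition of the transformed norm risk measure in \eqref{cash1} and of $f$ in \eqref{function_norm}, we have $T = \rho^f|_{\cX}$, i.e. $T(X) = \inf_{s\in\R}\{f(X-s)-s\}$ for $X \in \cX$. By Lemma~\ref{lemmaFGH}(i), $f$ is convex, monotone, dilatation monotone and $\|\cdot\|_1$ lower semicontinuous on $L^1$, and by Lemma~\ref{lemmaFGH}(ii) together with (FGH2), $f$ satisfies the coercive condition \eqref{coercive}. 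Hence Theorem~\ref{cash-thm} applies to $f$: the functional $\rho^f: L^1 \to (-\infty,\infty]$ is $\|\cdot\|_1$ lower semicontinuous, the infimum defining it is attained, and $\rho^f$ is given by $\rho^f(X) = \min_{s\in\R}\{\overline{F}(\|H(s-X)\|_G)-s\}$. As noted in the paragraph preceding Theorem~\ref{cash-thm}, $\rho^f$ also inherits convexity, monotonicity, cash-additivity and dilatation monotonicity from $f$.

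Next I would argue that $\rho^f$ is itself the canonical extension of $T$, i.e. $\rho^f = \overline{T}$, where $\overline{T}$ is defined by \eqref{eq1} applied to $\rho = T$. Since $T$ is convex, dilatation monotone (this is part of what is claimed, but it already follows from dilatation monotonicity of $f$ and formula \eqref{cash0}, or can be invoked as established in the text preceding Example~\ref{hgexmp} together with Lemma~\ref{lemmaFGH}) and has the Fatou property — the latter because $T = \rho^f|_\cX$ and $\rho^f$ is $\|\cdot\|_1$ lower semicontinuous on $L^1$, so in particular its restriction to $\cX$ has the Fatou property by the implication $(iii)\Rightarrow(i)$ of Proposition~\ref{starproperty} — Theorem~\ref{main_thm2} guarantees that $\overline{T}$ is \emph{the unique} extension of $T$ to $L^1$ that is dilatation monotone and $\sigma(L^1,\mathcal L)$ lower semicontinuous. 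But $\rho^f$ is such an extension: it agrees with $T$ on $\cX$ by construction, it is dilatation monotone by the above, and it is $\|\cdot\|_1$ lower semicontinuous by Theorem~\ref{cash-thm}, hence $\sigma(L^1,\mathcal L)$ lower semicontinuous by Proposition~\ref{starproperty} (applied to $\rho^f$ on $L^1$, which is dilatation monotone). By uniqueness, $\overline{T} = \rho^f$, which simultaneously gives formula \eqref{tnorm} (with $\min$ in place of $\inf$, in fact) and, via Theorem~\ref{main_thm2}, that $\overline{T}$ preserves convexity, monotonicity and cash-additivity of $T$; dilatation monotonicity and $\|\cdot\|_1$ lower semicontinuity come directly from Theorem~\ref{cash-thm}.

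The main obstacle I anticipate is the verification that $T$ has the Fatou property on $\cX$, since Theorem~\ref{main_thm2} is stated with the Fatou property as a hypothesis rather than $\|\cdot\|_1$ lower semicontinuity. The cleanest route is the one sketched above: establish $\|\cdot\|_1$ lower semicontinuity of $\rho^f$ on all of $L^1$ first (Theorem~\ref{cash-thm}), then note that $\|\cdot\|_1$ lower semicontinuity of a functional on $L^1$ restricts to $\|\cdot\|_1$ lower semicontinuity of its restriction to $\cX$, and finally invoke $(iii)\Rightarrow(i)$ in Proposition~\ref{starproperty} for the dilatation monotone functional $\rho^f|_\cX = T$. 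One small point worth checking carefully is that $T$ indeed takes values in $(-\infty,\infty]$ rather than merely $[-\infty,\infty]$, so that Theorem~\ref{main_thm2} applies verbatim; but this also follows once we know $\rho^f = \overline{T}$ and $\rho^f > -\infty$ on $L^1$, which is part of the conclusion of Theorem~\ref{cash-thm} (the attained infimum yields $\rho^f(X) = f(X-s) - s > -\infty$). Everything else is bookkeeping: collecting the properties from the two theorems and the one proposition and reading off formula \eqref{tnorm}.
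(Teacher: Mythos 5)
Your proposal is correct and takes essentially the same route as the paper: both identify the right-hand side of (\ref{tnorm}) as $\rho^f$ for $f$ as in (\ref{function_norm}), use Lemma~\ref{lemmaFGH} and Theorem~\ref{cash-thm} to see that $\rho^f$ is a dilatation monotone, $\|\cdot\|_1$ lower semicontinuous extension of $T$, and then conclude $\overline{T}=\rho^f$ from Proposition~\ref{starproperty} together with the uniqueness clause of Theorem~\ref{main_thm2}, with your write-up merely filling in details (the Fatou property of $T$ via restriction and $(iii)\Rightarrow(i)$, finiteness of $T$) that the paper leaves implicit. The only thing to adjust is a phrasing slip: $T>-\infty$ should be deduced directly from $T=\rho^f|_{\mathcal X}$ and the attainment of the infimum in Theorem~\ref{cash-thm}, not from the identity $\rho^f=\overline{T}$ that you are in the process of establishing, though your parenthetical already contains the correct reason.
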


\begin{proof}
Let $f$ be as in (\ref{function_norm}). Note that the right hand side of (\ref{tnorm}) is $\rho^f$.  It is clear that  $\rho^f$ extends $T$. Also by Lemma~\ref{lemmaFGH} and Theorem~\ref{cash-thm},  $\rho^f:L^1 \rightarrow (-\infty,\infty]$ is dilatation monotone and $||\cdot||_1$ lower semicontinuous. Therefore  by Proposition~\ref{starproperty} and Theorem~\ref{main_thm2}, we have that $\overline{T}(X)=\rho^f(X)$ for all $X \in L^1$, also $\overline{T}$ preserves the convexity, monotonicity, and cash-additivity of $T$.

\end{proof}

\begin{corollary}\label{tnrom_cor}
Under the assumptions (FGH1) and (FGH2), we have that
\begin{equation}\label{tnorm2}\overline{T}(X)=\sup_{\frac{d\mathbb{Q}}{d\mathbb{P}} \in \mathcal{L}}\Big\{\mathbb{E}_{\mathbb{Q}}[-X]-T^{\#}\big(-\frac{d\mathbb{Q}}{d\mathbb{P}}\big)\Big\} ,\quad\text{ for all } X\in L^1,\end{equation}
where $$T^{\#}(-\frac{d\mathbb{Q}}{d\mathbb{P}})=\min_{\eta \in L^{G^*}_+, \{\eta=0\} \subset \{\frac{d\mathbb{Q}}{d\mathbb{P}}=0\}} \Big\{ \mathbb{E}[\eta H^*(\frac{1}{\eta}\frac{d\mathbb{Q}}{d\mathbb{P}}\one_{\{\frac{d\mathbb{Q}}{d\mathbb{P}}>0\}})]+F^*(||\eta||^*_G)\Big\}.$$
\end{corollary}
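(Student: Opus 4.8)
The plan is to derive the first equality in (\ref{tnorm2}) as an instance of Theorem~\ref{main_rep_thm}, and then to identify the penalty term $T^{\#}(-\frac{d\mathbb{Q}}{d\mathbb{P}})$ with the explicit minimum computed in \cite{CL:09}. First I would check that $T$ satisfies the hypotheses of the second part of Theorem~\ref{main_rep_thm}. Let $f$ be as in (\ref{function_norm}). By Lemma~\ref{lemmaFGH}, $f$ is convex, monotone, dilatation monotone and $||\cdot||_1$ lower semicontinuous on $L^1$, and under (FGH2) it satisfies the coercive condition (\ref{coercive}); hence Theorem~\ref{cash-thm} applies and, as observed in the proof of Theorem~\ref{tnormcor}, $\rho^f:L^1\rightarrow(-\infty,\infty]$ extends $T$ and equals $\overline{T}$. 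Consequently $T:\cX\rightarrow(-\infty,\infty]$ is convex, monotone, cash-additive, dilatation monotone (since $\cX$ is stable under $X\mapsto\mathbb{E}[X|\pi]$), and $||\cdot||_1$ lower semicontinuous on $\cX$, so by Proposition~\ref{starproperty} it has the Fatou property. Moreover, (FGH1) yields some $s_0\in\mathbb{R}$ with $F(H(s_0)/G^{-1}(1))<\infty$, and since $||H(s_0)||_G=H(s_0)\,||\one||_G=H(s_0)/G^{-1}(1)$ we get $T(0)\le\overline{F}(||H(s_0)||_G)-s_0<\infty$, so $T$ is not identically $\infty$. Theorem~\ref{main_rep_thm} then gives
\[\overline{T}(X)=\sup_{\frac{d\mathbb{Q}}{d\mathbb{P}}\in\mathcal{L}}\Big\{\mathbb{E}_{\mathbb{Q}}[-X]-T^{\#}\big(-\tfrac{d\mathbb{Q}}{d\mathbb{P}}\big)\Big\},\qquad X\in L^1,\]
with $T^{\#}(Y)=\sup_{X\in\mathcal{L}}\{\mathbb{E}[XY]-T(X)\}$.

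Next I would rewrite $T^{\#}(-\frac{d\mathbb{Q}}{d\mathbb{P}})$ as a supremum over the Orlicz heart $M^\Phi$, which is where the conjugate from \cite{CL:09} is available. For $\frac{d\mathbb{Q}}{d\mathbb{P}}\in\mathcal{L}\subset L^\infty$ we have $\mathbb{E}[|X\frac{d\mathbb{Q}}{d\mathbb{P}}|]<\infty$ for every $X\in\cX$. Since $\cX$ is a dilatation monotone subset of $L^1$ containing $\mathcal{L}$ and $T$ has the Fatou property, Lemma~\ref{density}(ii) with $\cC=\cX$ gives $T^{\#}(-\frac{d\mathbb{Q}}{d\mathbb{P}})=\sup_{X\in\cX}\{\mathbb{E}_{\mathbb{Q}}[-X]-T(X)\}$. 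On the other hand $M^\Phi\subset\cX$: if $X\in M^\Phi$, then using $H(s-X)\le K_s+\tfrac12 H_0(2|X|)$ for a constant $K_s$ (by convexity of $H_0$) together with convexity of $G$ one gets $G(H(s-X))\le\tfrac12 G(2K_s)+\tfrac12\Phi(2|X|)$, which is integrable, so $H(s-X)\in L^G$ for every $s$. As $\mathcal{L}\subset L^\infty\subset M^\Phi$ as well, squeezing the three suprema yields
\[T^{\#}\big(-\tfrac{d\mathbb{Q}}{d\mathbb{P}}\big)=\sup_{X\in M^\Phi}\big\{\mathbb{E}_{\mathbb{Q}}[-X]-T(X)\big\},\]
that is, $T^{\#}(-\frac{d\mathbb{Q}}{d\mathbb{P}})$ is precisely the minimal penalty function of the convex risk measure $T$ on $M^\Phi$, evaluated at $\mathbb{Q}$.

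Finally I would invoke \cite{CL:09}: under (FGH1) and (FGH2) the dual representation of the transformed norm risk measure on $M^\Phi$ proved there identifies this minimal penalty function with
\[\min_{\eta\in L^{G^*}_+,\ \{\eta=0\}\subset\{\frac{d\mathbb{Q}}{d\mathbb{P}}=0\}}\Big\{\mathbb{E}\big[\eta H^*\big(\tfrac{1}{\eta}\tfrac{d\mathbb{Q}}{d\mathbb{P}}\one_{\{\frac{d\mathbb{Q}}{d\mathbb{P}}>0\}}\big)\big]+F^*(||\eta||^*_G)\Big\},\]
with the infimum attained; substituting this into the representation of $\overline{T}$ from the first step gives (\ref{tnorm2}). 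I expect the genuinely new content to be only the reduction from the supremum over the (in general non-linear and comparatively small) set $\mathcal{L}$ to the supremum over $M^\Phi$, for which Lemma~\ref{density}(ii) and the boundedness of densities in $\mathcal{L}$ are tailor-made; the computation of the conjugate itself — an inf-convolution/minimax argument combining the conjugates of $F$, of $H$ and of the Luxemburg norm, together with the attainment — is the technically demanding part, and it is quoted from \cite{CL:09} rather than reproved. The one point to treat carefully is that \cite{CL:09}'s formula, a priori stated for $\mathbb{Q}$ with $\frac{d\mathbb{Q}}{d\mathbb{P}}$ in the appropriate dual space, indeed covers our densities; this is immediate since $\mathcal{L}\subset L^\infty\subset L^{G^*}$.
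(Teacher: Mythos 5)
Your proposal is correct and follows essentially the same route as the paper: apply Theorem~\ref{main_rep_thm} for the outer representation, use Lemma~\ref{density}(ii) to pass from the supremum over $\mathcal{L}$ to the supremum over $M^{\Phi}$, and quote \cite{CL:09} for the explicit form of the penalty on $M^{\Phi}$. The only (immaterial) differences are that you verify $T\not\equiv\infty$ directly from (FGH1) where the paper cites (\cite{CL:09}, Lemma 5.2), and you invoke Lemma~\ref{density}(ii) with $\cC=\cX$ plus a squeeze through $M^{\Phi}\subset\cX$ where the paper takes $\cC=M^{\Phi}$ directly.
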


\begin{proof}
First we note that (FGH1) ensures that $T$ is not identical equal to $\infty$ (see e.g. (\cite{CL:09}, Lemma 5.2)). By Theorem~\ref{tnormcor} and Theorem~\ref{main_rep_thm}, we have that 
$$\overline{T}(X)=\sup_{\frac{d\mathbb{Q}}{d\mathbb{P}} \in \mathcal{L}}\Big\{\mathbb{E}_{\mathbb{Q}}[-X]-T^{\#}\big(-\frac{d\mathbb{Q}}{d\mathbb{P}}\big)\Big\}, \,\, \quad\text{ for all } X\in L^1,$$
where $T^{\#}$ is given by the formula~(\ref{eq2}), that is $T^{\#}(-\frac{d\mathbb{Q}}{d\mathbb{P}})=\sup_{X \in \mathcal{L}}\{\mathbb{E}_{\mathbb{Q}}[-X]-T(X)\}$. By equation (4.2) in \cite{CL:09} and (\cite{CL:09}, Theorem 5.1) we have that 

$$\sup_{X \in M^{\Phi}}\{\mathbb{E}_{\mathbb{Q}}[-X]-T(X)\}=\min_{\eta \in L^{G^*}_+, \{\eta=0\} \subset \{\frac{d\mathbb{Q}}{d\mathbb{P}}=0\}} \Big\{ \mathbb{E}[\eta H^*(\frac{1}{\eta}\frac{d\mathbb{Q}}{d\mathbb{P}}\one_{\{\frac{d\mathbb{Q}}{d\mathbb{P}}>0\}})]+F^*(||\eta||^*_G)\Big\},
$$

where $\Phi=G \circ H_0$ and $H_0(x)=H(x)-H(0)$, for $x \geq 0$. The result now follows by applying Lemma~\ref{density}(ii) with $\cC=M^{\Phi}$.
\end{proof}

\subsection*{Higher order dual risk measures}
In this subsection, we will discuss the Kusuoka representations of higher order dual risk measures. We recall that the Average Value at Risk of $X \in L^1$ at level $\alpha \in (0,1]$ is given by the following formula
$$\mathrm{AVaR}_{\alpha}(X)=\frac{1}{\alpha} \int_{0}^\alpha \mathrm{VaR}_t(X)\mathrm{d}t,$$
where $\mathrm{VaR}_{t}(X)=\inf\{m \,\, : \,\, \mathbb{P}(X+m <0) \leq t\}$. Also recall that $\mathrm{AVaR}_{\alpha}$ is cash-additive, dilatation monotone and $||\cdot||_1$ continuous.

In the following, fix some $p,c>1$ and put $$G(x)=x^p, \;\;H(x)=x^+,\;\; F(x)=c x.$$ 
Then the extended transformed norm risk measure is given by the following formula.
\begin{equation}\label{Highorder}\overline{T}_{c,p}(X)=\inf_{s \in \mathbb{R}}\{c||(s-X)^+||_p-s\},\qquad X \in L^1,\end{equation}
where $||\cdot||_p$ is the $L^p$-norm. $\overline{T}_{c,p}$ corresponds to the higher order dual risk measure studied in \cite{KPR:10}. In this article, the authors derived a Kusuoka representation of $\overline{T}_{c,p}$ for $X \in L^p$. In the following corollary, we show that the representation holds for any $X \in L^1$.

\begin{corollary}\label{highorder_cor} 
Let $\overline{T}_{c,p}$ be as in (\ref{Highorder}). Then $\overline{T}_{c,p}$ admits the following Kusuoka representation representation. For any $X \in L^1$,
$$\overline{T}_{c,p}(X)=\sup_{\mu \in \mathcal{M}_q} \int_0^1 \mathrm{AVaR}_{\alpha}(X) \mu(\mathrm{d}\alpha),$$
where $q$ such that $\frac{1}{q}+\frac{1}{p}=1$ and $\mathcal{M}_{q}=\{\mu \in \mathcal{P}((0,1]) \,\, : \,\, \int_0^1 | \int_{\alpha}^1 \frac{\mu (\mathrm{d}t)}{t}|^q \mathrm{d}\alpha \leq c^q\}$.
\end{corollary}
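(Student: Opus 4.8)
The plan is to show that the right-hand side of the claimed identity, i.e.\ the functional
\[R(X):=\sup_{\mu\in\mathcal{M}_q}\int_0^1\mathrm{AVaR}_{\alpha}(X)\,\mu(\mathrm{d}\alpha),\qquad X\in L^1,\]
coincides on all of $L^1$ with $\overline{T}_{c,p}$, by squeezing $R$ between $\overline{T}_{c,p}$ and itself through finite conditional expectations. First I would record the data. Since $G(x)=x^p$ one has $G^{-1}(1)=1$, so (FGH1) is trivially satisfied ($F=c\,x$ is real-valued everywhere) and (FGH2) reads $\lim_{s\to\infty}(c-1)s=\infty$, which holds because $c>1$; hence Theorem~\ref{tnormcor} applies and the functional $\overline{T}_{c,p}$ of (\ref{Highorder}) equals the extension (\ref{eq1}) of $T_{c,p}$ and is convex, monotone, cash-additive, dilatation monotone and $||\cdot||_1$ lower semicontinuous, while $\overline{T}_{c,p}|_{\cX}=T_{c,p}$. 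Moreover $\Phi=G\circ H_0$ is $x\mapsto x^p$, so the Orlicz heart is $M^\Phi=L^p$ and $\mathcal{L}\subset L^\infty\subset L^p\subset\cX$. By \cite{KPR:10}, the asserted representation is already known on the Orlicz heart: $\overline{T}_{c,p}(Z)=R(Z)$ for every $Z\in L^p$, in particular for every simple $Z$, and $T_{c,p}(Z)=R(Z)$ there as well.

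Next I would check that $R$ is well-defined with values in $(-\infty,\infty]$ and is dilatation monotone. Since $\alpha\mapsto\mathrm{AVaR}_{\alpha}(Z)$ is nonincreasing, $\mathrm{AVaR}_{\alpha}(Z)\ge\mathrm{AVaR}_{1}(Z)=-\mathbb{E}[Z]$ for all $\alpha\in(0,1]$, so each integrand is bounded below by a constant and $\int_0^1\mathrm{AVaR}_{\alpha}(Z)\,\mu(\mathrm{d}\alpha)\in[-\mathbb{E}[Z],\infty]$ is unambiguous for every $\mu\in\mathcal{M}_q$; moreover $R(Z)\ge\mathrm{AVaR}_1(Z)=\mathbb{E}[-Z]>-\infty$ because $\delta_1\in\mathcal{M}_q$ (valid as $c\ge1$). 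Dilatation monotonicity of each $\mathrm{AVaR}_{\alpha}$ gives $\mathrm{AVaR}_{\alpha}(\mathbb{E}[X|\pi])\le\mathrm{AVaR}_{\alpha}(X)$ pointwise in $\alpha$, so integrating against $\mu\ge0$ and taking the supremum over $\mu$ yields $R(\mathbb{E}[X|\pi])\le R(X)$. The inequality $\overline{T}_{c,p}\le R$ is then immediate: using the identification above, $\overline{T}_{c,p}(X)=\sup_{\pi\in\Pi}T_{c,p}(\mathbb{E}[X|\pi])=\sup_{\pi\in\Pi}R(\mathbb{E}[X|\pi])\le R(X)$ for $X\in L^1$, where the middle equality uses $\mathbb{E}[X|\pi]\in\mathcal{L}\subset L^p$.

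For the reverse inequality, fix $X\in L^1$ and, by Lemma~\ref{Mlemma1}, choose $(\pi_n)_{n\in\mathbb{N}}\subset\Pi$ with $\mathbb{E}[X|\pi_n]\xrightarrow{a.s.}X$ and $\sup_n|\mathbb{E}[X|\pi_n]|\in L^1$; by dominated convergence $\mathbb{E}[X|\pi_n]\to X$ in $||\cdot||_1$. Fix $\mu\in\mathcal{M}_q$. Since $\mathrm{AVaR}_{\alpha}$ is $||\cdot||_1$ continuous, $\mathrm{AVaR}_{\alpha}(\mathbb{E}[X|\pi_n])\to\mathrm{AVaR}_{\alpha}(X)$ for each $\alpha\in(0,1]$, while $\mathrm{AVaR}_{\alpha}(\mathbb{E}[X|\pi_n])\ge\mathrm{AVaR}_1(\mathbb{E}[X|\pi_n])=-\mathbb{E}[\mathbb{E}[X|\pi_n]]=-\mathbb{E}[X]$ for all $n$ and $\alpha$. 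Fatou's lemma for the finite measure $\mu$, with this uniform constant lower bound, then yields
\begin{align*}
\int_0^1\mathrm{AVaR}_{\alpha}(X)\,\mu(\mathrm{d}\alpha)&\le\liminf_n\int_0^1\mathrm{AVaR}_{\alpha}(\mathbb{E}[X|\pi_n])\,\mu(\mathrm{d}\alpha)\\
&\le\liminf_n R(\mathbb{E}[X|\pi_n])\le\sup_{\pi\in\Pi}R(\mathbb{E}[X|\pi])=\overline{T}_{c,p}(X),
\end{align*}
and taking the supremum over $\mu\in\mathcal{M}_q$ gives $R(X)\le\overline{T}_{c,p}(X)$. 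Together with the previous step this proves $R=\overline{T}_{c,p}$ on $L^1$.

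The step I expect to require the most care is the passage to the limit inside the outer integral against $\mu$: when $X\notin L^p$ the integrand $\alpha\mapsto\mathrm{AVaR}_{\alpha}(X)$ need not be $\mu$-integrable (it may grow like $\alpha^{-1}$ as $\alpha\to0$), and the supremum defining $R$ ranges over all of $\mathcal{M}_q$, which does contain measures carrying substantial mass near $0$. What makes the argument go through is the uniform, $\alpha$-free lower bound $\mathrm{AVaR}_{\alpha}(Z)\ge\mathrm{AVaR}_1(Z)=-\mathbb{E}[Z]$, together with the $||\cdot||_1$ continuity of $\mathrm{AVaR}_{\alpha}$ and the dominated approximation of $X$ by $\mathbb{E}[X|\pi_n]$ supplied by Lemma~\ref{Mlemma1}: these render the relevant $\mu$-integrals well-defined in $(-\infty,\infty]$ and license the one-sided Fatou estimate above. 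An alternative route is to specialize Corollary~\ref{tnrom_cor}, with $F^{*}$ and $H^{*}$ the convex indicators of $(-\infty,c]$ and $[0,1]$ and $||\cdot||_G^{*}=||\cdot||_q$, to obtain $\overline{T}_{c,p}(X)=\sup\{\mathbb{E}[-XY]:Y\in\mathcal{L},\ Y\ge0,\ \mathbb{E}[Y]=1,\ ||Y||_q\le c\}$, and then invoke the correspondence between such sets of densities and the Kusuoka measures $\mathcal{M}_q$ from \cite{KPR:10}; but this still requires upgrading the supremum from simple densities to all $L^q$-densities, where a similar integrability subtlety reappears, so the $\mathrm{AVaR}$-based squeeze seems the cleaner implementation.
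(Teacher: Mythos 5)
Your proof is correct and follows essentially the same route as the paper: both invoke the Kusuoka representation of \cite{KPR:10} on $\mathcal{L}\subset L^p=M^\Phi$, pass to $X\in L^1$ via finite conditional expectations using the dilatation monotonicity and $\|\cdot\|_1$ continuity of $\mathrm{AVaR}_\alpha$, and handle the reverse inequality with Fatou's lemma for $\mu$ under the uniform lower bound $-\mathbb{E}[X]$. The only cosmetic differences are that you derive that bound from the monotonicity of $\alpha\mapsto\mathrm{AVaR}_\alpha$ rather than from dilatation monotonicity with the trivial partition, and you phrase the first inequality through dilatation monotonicity of $R$ instead of applying (\ref{remark000}) directly.
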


\begin{proof}
It is straightforward to see that the condition (FGH2) is satisfied, thus by Theorem~\ref{tnormcor} we have that $\overline{T}_{c,p}$ is dilatation monotone and $||\cdot||_1$ lower semicontinuous. Let $X \in L^1$ and fix $(\pi_n)_{n \in \mathbb{N}} \subset \Pi$ such that $\mathbb{E}[X|\pi_n] \xrightarrow{{||\cdot||_1}} X$. Then by ~(\ref{remark000}),   $\overline{T}_{c,p}(X)=\displaystyle\lim_n \overline{T}_{c,p}(\mathbb{E}[X|\pi_n])$. 

By the Kusuoka representation (\cite{KPR:10}, Theorem 1) and the dilatation monotonicity of $\text{AVaR}_{\alpha}$, we get that $$\overline{T}_{c,p}(\mathbb{E}[X|\pi_n])=\sup_{\mu \in \mathcal{M}_q} \int_0^1 \mathrm{AVaR}_{\alpha}(\mathbb{E}[X|\pi_n]) \mu(\mathrm{d}\alpha)\leq \sup_{\mu \in \mathcal{M}_q} \int_0^1 \mathrm{AVaR}_{\alpha}(X) \mu(\mathrm{d}\alpha) .$$ 
Thus by letting $n \rightarrow \infty$, we get that $$\overline{T}_{c,p}(X) \leq  \sup_{\mu \in \mathcal{M}_q} \int_0^1 \mathrm{AVaR}_{\alpha}(X) \mu(\mathrm{d}\alpha).$$ 
On the other hand, since $\mathrm{AVaR}_{\alpha}$ is $|| \cdot ||_1$ continuous,   $\mathrm{AVaR}_{\alpha}(X)= \lim_n\mathrm{AVaR}_{\alpha}(\mathbb{E}[X|\pi_n])$ for each $\alpha \in (0,1]$. Moreover, for each $n\in\N$, by the dilatation monotonicity of $\mathrm{AVaR}_{\alpha}$ applied to $\E[X|\pi_n]$,   $$ \mathrm{AVaR}_{\alpha}(\mathbb{E}[X|\pi_n])\geq \mathrm{AVaR}_{\alpha}\big(\E\big[\mathbb{E}[X|\pi_n]\big]\big)= \mathrm{AVaR}_{\alpha}(\mathbb{E}[X]) =-\mathbb{E}[X],$$ for each $\alpha \in (0,1]$, where the last equality is due to cash-additivity of $\mathrm{AVaR}_{\alpha}$. Now, fix some $\mu  \in \mathcal{M}_{q}$. By Fatou's Lemma applied to the sequence $\big(\mathrm{AVaR}_{\bullet}(\mathbb{E}[X|\pi_n])\big)_n$,   $$\int_{0}^1 \mathrm{AVaR}_{\alpha}(X) \mu(\mathrm{d}\alpha) \leq \liminf_n \int_{0}^1 \mathrm{AVaR}_{\alpha}(\mathbb{E}[X|\pi_n]) \mu(\mathrm{d}\alpha).$$ By using again the Kusuoka representation (\cite{KPR:10}, Theorem 1), we get that 
\begin{align*}
  \int_{0}^1 \mathrm{AVaR}_{\alpha}(X)\mu(d\alpha) \leq& \liminf_n \sup_{\mu \in \mathcal{M}_q }\int_{0}^1 \mathrm{AVaR}_{\alpha}(\mathbb{E}[X|\pi_n]) \mu(d\alpha) =\liminf_n\overline{T}_{c,p}(\mathbb{E}[X|\pi_n])\\
  =&\overline{T}_{c,p}(X).
\end{align*}
Taking supremum over $\mathcal{M}_q$, we conclude that $$\sup_{\mu \in \mathcal{M}_q} \int_0^1 \text{AVaR}_{\alpha}(X) \mu(d\alpha)\leq \overline{T}_{c,p}(X).$$
This completes the proof.
\end{proof}

\textbf{Acknowledgements.} The authors would like to thank Niushan Gao for valuable comments that improved the exposition of the manuscript.


\end{document}